\newtheorem{theorem}{Theorem}
\newtheorem{corollary}{Corollary}
\def\bbl@set@language#1{%
  \edef\languagename{%
    \ifnum\escapechar=\expandafter`\string#1\@empty
    \else\string#1\@empty\fi}%
  \@ifundefined{babel@language@alias@\languagename}{}{%
    \edef\languagename{\@nameuse{babel@language@alias@\languagename}}%
  }%
  \select@language{\languagename}%
  \expandafter\ifx\csname date\languagename\endcsname\relax\else
    \if@filesw
      \protected@write\@auxout{}{\string\select@language{\languagename}}%
      \bbl@for\bbl@tempa\BabelContentsFiles{%
        \addtocontents{\bbl@tempa}{\xstring\select@language{\languagename}}}%
      \bbl@usehooks{write}{}%
    \fi
  \fi}
\newcommand{\DeclareLanguageAlias}[2]{%
  \global\@namedef{babel@language@alias@#1}{#2}%
}
\Crefname{equation}{Eq.}{Eqs.}
\Crefname{figure}{Fig.}{Figs.}
\Crefname{tabular}{Tab.}{Tabs.}
\Crefname{section}{Sec.}{Secs.}
\DeclareMathOperator{\Var}{Var}
\newcommand{\fe}{\mathsf{FE}}
\newcommand{\ce}{\mathsf{CE}}
\newcommand{\lfe}{\mathcal{L}_{\fe}}
\newcommand{\lce}{\mathcal{L}_{\ce}}
\renewcommand{\i}{\mathrm{i}}
\newcommand{\e}{\mathrm{e}}
\begin{document}

\title{Recovering optimal precision in quantum sensing using imperfect control}
\author{Zi-Shen Li}\thanks{These authors contributed equally.}
\affiliation{QICI Quantum Information and Computation Initiative, Department of Computer Science, School of Computing and Data Science, The University of Hong Kong, Pokfulam Road, Hong Kong, China}

\author{Xinyue Long}\thanks{These authors contributed equally.}
\affiliation{Quantum Science Center of Guangdong-Hong Kong-Macao Greater Bay Area (Guangdong), Shenzhen, 518045, China}

\author{Xiaodong Yang}
\email{yangxd@szu.edu.cn}
\affiliation{Institute of Quantum Precision Measurement, State Key Laboratory of Radio Frequency Heterogeneous Integration, College of Physics and Optoelectronic Engineering, Shenzhen University, Shenzhen 518060, China}
\affiliation{Quantum Science Center of Guangdong-Hong Kong-Macao Greater Bay Area (Guangdong), Shenzhen, 518045, China}

\author{Yuxiang Yang}
\email{yuxiang@cs.hku.hk}
\affiliation{QICI Quantum Information and Computation Initiative, Department of Computer Science, School of Computing and Data Science, The University of Hong Kong, Pokfulam Road, Hong Kong, China}

\author{Dawei Lu}
\email{ludw@sustech.edu.cn}
\affiliation{Shenzhen Institute for Quantum Science and Engineering and Department of Physics, Southern University of Science and Technology, Shenzhen, 518055, China}
\affiliation{Quantum Science Center of Guangdong-Hong Kong-Macao Greater Bay Area (Guangdong), Shenzhen, 518045, China}
\affiliation{Guangdong Provincial Key Laboratory of Quantum Science and Engineering, Southern University of Science and Technology, Shenzhen 518055, China}

\begin{abstract}
Quantum control plays a crucial role in enhancing precision scaling for quantum sensing. 
However, most existing protocols require perfect control, even though real-world devices inevitably have control imperfections.
Here, we consider a fundamental setting of quantum sensing with imperfect clocks, where the duration of control pulses and the interrogation time are all subject to uncertainty.
Under this scenario, we investigate the task of frequency estimation in the presence of a non-Markovian environment.
We design a control strategy and prove that it outperforms any control-free strategies, recovering the optimal Heisenberg scaling up to a small error term that is intrinsic to this model. 
We further demonstrate the advantage of our control strategy via experiments on a nuclear magnetic resonance (NMR) platform. 
Our finding confirms that the advantage of quantum control in quantum sensing persists even in the presence of imperfections. 
\end{abstract}

\maketitle

\textit{Introduction.---}Quantum sensors that harness coherence and entanglement can surpass the ultimate classical limit of precision \cite{giovannettiAdvancesQuantumMetrology2011,degenQuantumSensing2017}.
However, this quantum advantage is often vulnerable to various types of noise \cite{huelgaImprovementFrequencyStandardsQuantumEntanglement1997,dornerOptimalQuantumPhaseEstimation2009,escherGeneralFrameworkEstimatingUltimatePrecisionLimit2011,demkowicz-dobrzanskiElusiveHeisenbergLimitQuantumenhancedMetrology2012,kolodynskiEfficientToolsQuantumMetrologyUncorrelatedNoise2013,thomas-peterRealworldQuantumSensorsEvaluatingResourcesPrecision2011}.
Previous research has demonstrated that the advantage can be restored with the assistance of quantum control, including error correction \cite{kesslerQuantumErrorCorrectionMetrology2014,durImprovedQuantumMetrologyUsingQuantumError2014,arradIncreasingSensingResolutionErrorCorrection2013,demkowicz-dobrzanskiAdaptiveQuantumMetrologyGeneralMarkovianNoise2017,zhouAchievingHeisenbergLimitQuantumMetrologyUsing2018}, dynamical decoupling \cite{sekatskiDynamicalDecouplingLeadsImprovedScalingNoisy2016}, learning control strategies \cite{poggialiOptimalControlOnequbitQuantumSensing2017,liuQuantumParameterEstimationOptimalControl2017,marciniakOptimalMetrologyProgrammableQuantumSensors2022,fallaniLearningFeedbackControlStrategiesQuantumMetrology2022,zhaiControlenhancedQuantumMetrologyMarkovianNoise2023,yangControlenhancedNonMarkovianQuantumMetrology2024}, and Floquet engineering \cite{baiFloquetEngineeringOvercomeNoGoTheoremNoisy2023}.
Most existing applications assume perfect control, whereas real-world devices are subject to imperfections that can undermine the advantages of control-assisted quantum sensing.
For example, the lagging effect of error correction can lead to additional bias \cite{rojkovBiasErrorcorrectedQuantumSensing2022} and the uncertainty of control fields challenges the utility of critical metrology \cite{mihailescuUncertainQuantumCriticalMetrologySingleMulti2024a}.
It is unknown whether control protocols assisted by imperfect control operations can still achieve superior performance in quantum sensing.


In this Letter, we consider quantum sensing with devices subject to clock uncertainty.
Explicitly, we assume that the duration of \textit{any} operation performed by the experimenter, such as the preparation of the probe state, the intermediate control, the interrogation process, or the final measurement, is subject to a small uncertainty.
Consequently, the measurement statistics of quantum system are averaged in time. 
More broadly, clock uncertainty encompasses unknowable discrepancies between the experimenter's prior knowledge of frequency or time and their actual values in reality.
From a fundamental perspective, the time averaging reduces the off-diagonal terms in quantum state, thereby making it relevant to dephasing, \cite{xuerebImpactImperfectTimekeepingQuantumControl2023,ballRoleMasterClockStabilityQuantumInformation2016} ergodicity \cite{penroseFoundationsStatisticalMechanics1979,deutschEigenstateThermalizationHypothesis2018}, and thermalization \cite{srednickiChaosQuantumThermalization1994,closTimeresolvedObservationThermalizationIsolatedQuantumSystem2016}.

We show, both theoretically and experimentally, that protocols utilizing imperfect control can still offer advantages over control-free protocols in the presence of clock uncertainty. 
Specifically, we consider the task of frequency estimation under a non-Markovian environment with strong system-environment coupling.
We demonstrate that clock uncertainty can significantly compromise both precision and accuracy in the control-free case.
To address this issue, we propose a control strategy and show that its error is substantially reduced compared to any strategy without intermediate control.
Remarkably, we observe that the error of our control strategy nearly matches the \textit{hardware limit}, which we define to be the systematic error that persists even in the absence of quantum decoherence due to clock uncertainty.
In this sense, our strategy recovers the optimal precision of quantum sensing.
Our method is not a simple extension of any existing approaches, such as dynamical decoupling \cite{sekatskiDynamicalDecouplingLeadsImprovedScalingNoisy2016} or error correction \cite{kesslerQuantumErrorCorrectionMetrology2014,durImprovedQuantumMetrologyUsingQuantumError2014,arradIncreasingSensingResolutionErrorCorrection2013,demkowicz-dobrzanskiAdaptiveQuantumMetrologyGeneralMarkovianNoise2017,zhouAchievingHeisenbergLimitQuantumMetrologyUsing2018}. 
We further demonstrate our findings via experiments on a nuclear magnetic resonance (NMR) processor, where our control strategy is confirmed to be superior to control-free strategies.  
Our results confirm that quantum control with imperfections can still bring advantages to quantum sensing.

\medskip

\textit{Frequency estimation in the presence of clock uncertainty.}---We consider a frequency estimation task in an open quantum system coupled to an environment.
The Hamiltonian of the joint system is $H(\omega)=H_{\mathsf{S}}(\omega) + H_{\mathsf{E}} + H_{\mathsf{SE}}$, where $H_{\mathsf{S}}(\omega)$ is the system Hamiltonian that depends on a frequency $\omega$, $H_{\mathsf{E}}$ is the environment Hamiltonian, and $H_{\mathsf{SE}}$ is a system-environment interaction term.
The task is to estimate the frequency $\omega$ by interrogating the system for time $T$ in each run of the experiment.
We consider a model of clock uncertainty where the stopwatch used by the experimenter to determine the time interval between any pair of events (e.g., the start and the end of interrogation) is inaccurate. As illustrated in \Cref{fig1}~(a), time intervals measured by the stopwatch are subject to some inherent clock uncertainty.
As shown in \Cref{fig1}~(b) and (c), clock uncertainty affects each time slot, including the pulse widths and the intervals between pulses.
Mathematically, we model the clock uncertainty as a map that takes any positive real number $t$ (the duration of an operation set by the experimenter) to $\hat{t}=t(1+u)$ (the actual duration of the operation), where $u$ is a random variable with a probability distribution $f$. 
Therefore, the clock uncertainty is captured by the function $f$, which is unknown to the experimenter and assumed to be bounded, i.e., its domain is a subset of $[-\epsilon,\epsilon]$ for some $\epsilon\ge0$.

To estimate $\omega$, the experimenter is given access to an ancilla and begins by preparing the system and the ancilla in a suitable probe state. The system evolves for time $T$ with coupling to the environment under the Hamiltonian $H(\omega)$ and then gets measured jointly with the ancilla. 
During the evolution, the experimenter is allowed to perform {\it intermediate control}, which could be either pulse control or continuous control. This procedure is repeated for $\nu$ times to generate measurement statistics and to output an estimate $\hat{\omega}$ in the end. 
A configuration of the state preparation, measurement, and intermediate control is referred to as a {\it strategy}. Here we focus on the comparison between two families of strategies: free-evolution (FE) strategies, which do not use any intermediate control, and the generic family of all control-enhanced (CE) strategies.  
For comparison, we define the following loss function:
\begin{align}\label{eq:loss-max-min-mse}
    \mathcal{L_S} =  \min_{s \in \mathcal{S}} \max_{f \in \mathcal{N}} \mathcal{R}\left(f, s\right),
\end{align}
where $\mathcal{R}\left(f, s\right)$ is the mean-squared error (MSE) of the estimate corresponding to a strategy $s$ under the clock uncertainty distribution $f$, $\mathcal{S}\in\{\fe,\ce\}$, and $\mathcal{N}$ is the set of all possible uncertainty distributions (known to the experimenter).
Intuitively, the experimenter has to design the best strategy while accounting for the error corresponding to any clock uncertainty distribution $f\in\mathcal{N}$, which is bounded by $\mathcal{L_S}$.

For each repetition of the experiment, the measurement outcome obeys a distribution $p(\omega,s,u_1,u_2,\dots)$ that depends on the unknown parameter $\omega$, the strategy $s$, and the clock uncertainty $u_i$ ($i=1,2,\dots$) of the $i$-th operation. Since each $u_i$ follows the distribution $f$, the effective distribution of the outcome is $p_f(\omega,s)=\int du_1\,f(u_1)\int du_2\,f(u_2)\cdots\,p(\omega,s,u_1,u_2,\dots)$. The estimate is generated from $\nu$ independent and identically-distributed samples with the distribution $p_f(\omega,s)$.
The estimation error is thus bounded by the Cram\'er-Rao bound (CRB). 
Note that the estimate does not necessarily satisfy the unbiasedness condition due to the clock uncertainty. Therefore, we need to use the general form of the CRB (see Ref.~\cite{treeDetectionEstimationModulationTheory2001} for more details)
\begin{align}\label{eq:bias-crb}
    \mathcal{R}\left(f, s\right)\ge \frac{\left(1+\partial b/\partial \omega\right)^2}{\nu\mathcal{F}_{\omega}(f,s)}+b(f,s)^2,
\end{align}
where $\mathcal{F}_{\omega}(f,s)$ is the Fisher information (FI) of $p_f(\omega,s)$ at $\omega$, and $b$ is the bias of the estimator defined as $b:=\mathrm{E}(\hat{\omega})-\omega$.
The FI is upper bounded by the quantum Fisher information (QFI) of the system state prior to the measurement, denoted as $\mathcal{F}^{\rm Q}_{\omega}(f,s)$, which can be achieved by optimizing the measurement in the single-parameter case \cite{braunsteinStatisticalDistanceGeometryQuantumStates1994}.

Note that the CRB (\ref{eq:bias-crb}) is not necessarily achievable. Instead, we will use it to set a lower bound $\mathcal{L}_{\fe}$, the loss function of the free-evolution case. On the other hand, we will provide an upper bound for $\mathcal{L}_{\ce}$ by designing a specific control strategy and estimating its error. 



\medskip

\textit{The advantage of control.}---We compare FE and CE through a concrete example of noisy frequency estimation. Let both the system and the environment be two-dimensional and $H_{\mathsf{E}}$ be the identity, $H_{\mathsf{S}}(\omega)=(\omega/2) \sigma_Z^{\mathsf{S}}$ with $\omega$ being the parameter of interest, and $H_{\mathsf{SE}}=g\text{SWAP}$, where $\sigma_Z^{\mathsf{S}}$ denotes the Pauli-$Z$ rotation on the system, $g\ge 0$ is the interaction strength and SWAP denotes the SWAP gate.
The environment is initialized in the state $\ket{0^E}$. 
The set of possible clock uncertainty distributions is defined as $\mathcal{N}=\{f|f(u)=0~\forall~u\notin [-\epsilon, \epsilon]\}$.
For simplicity, we assume for now $f\in\mathcal{N}_\delta$ to be a delta function $\mathcal{N}_{\delta}:=\{\delta(u-\xi)|\xi\in[-\epsilon,\epsilon]\}$ and discuss later how this assumption can be lifted.

In the absence of the environment [here referred to as the \emph{interaction-free} (IF) case, i.e., when $g=0$], the statistical error is bounded by the inverse of the quantum Fisher information (QFI), which is $1/(\nu T^2)$, known as the \textit{Heisenberg limit} \cite{giovannettiQuantumMetrology2006}. As the total interrogation time is still subject to the clock uncertainty, there is a systematic error of the scale $O(\epsilon^2\omega^2)$ \cite{SM}.
Since this systematic error term persists in the absence of the environment, we define it as the \textit{hardware limit} of this task.

Next, we introduce the FE case, where the joint system undergoes the evolution $U_{\mathsf{FE}}(T):=e^{-iHT}$. Strategies $s\in\fe$ differ only in probe states and measurements [corresponding to the pulses depicted in \Cref{fig1}~(b)]. Our analysis focuses on the strong-coupling regime, where $g\gg|\omega|$. 
Subsequently, we show a lower bound on the loss function for any FE strategy.

In order to bound $\lfe$, defined in Eq.~(\ref{eq:loss-max-min-mse}), we first apply the max-min inequality, which yields $\lfe\ge\max_{f\in\mathcal{N}}\min_{s\in\fe}\mathcal{R}(f,s)$. Next, we apply the (biased) CRB (\ref{eq:bias-crb}) to bound the MSE of every FE strategy. Explicitly, we show that (see Ref.~\cite{SM} 
for the proof) 
\begin{equation}
	\begin{aligned}\label{eq:inter1}
        \mathcal{F}^{\rm Q}_{\omega}\left[\delta(u-\xi),\fe\right] \le& \frac{2T^2(1+\xi)^2 \cos^2 \left[T \Omega (1+\xi)\right]}{\cos \left[2 T(1+\xi) \Omega \right]+3} \\
        &+ O\left({\omega^2T^2}/{g^2}\right),
	\end{aligned}
\end{equation}
where $\mathcal{F}^{\rm Q}_{\omega}\left[\delta(u-\xi),\fe\right]$ is the QFI maximized over all FE strategies with delta function $f$ and $\Omega:=\sqrt{g^2+\omega^2/4}$.
It remains to be determined how the bias can affect the CRB: 
As shown in Eq.~(\ref{eq:bias-crb}), the statistical error can be reduced if $\partial b/\partial\omega$ is negative, at the cost of increasing the systematic error.
Therefore, we can lower-bound the total MSE by taking the optimal tradeoff between these two error terms:
\begin{align}\label{eq:inter2}
    \lfe \ge \max_{f\in\mathcal{N}_\delta} \frac{1}{\omega^{-2} + \nu \mathcal{F}_\omega^Q\left[\delta(u-\xi),\fe\right]}.
\end{align}
We assume that the value of $\omega$ is close to a specific reference value, which we set to zero in this case, i.e., $|\omega| \ll 1$. 
This assumption is necessary to ensure that $\omega T$ falls within a known period, satisfying the local condition of the CRB theorem \cite{explain-assumption-opt-bias}.
Note that saturating \Cref{eq:inter2} necessitates arbitrarily changing the bias, which is generally unrealistic.
Finally, combing Eqs. (\ref{eq:inter1}) and (\ref{eq:inter2}) and taking the maximum over $\mathcal{N}_\delta$,
we get the following lower bounds on the loss function within small corrections \cite{corrections-loss-FE}:
\begin{align}
    &\lfe \ge \frac{g^2\omega^2}{\nu\omega^4T^2 + g^2}, &\text{ if } \epsilon \ge \epsilon^\ast,\label{eq:worst-case-failed-omega}\\
    &\lfe \ge \frac{1}{\nu  T^2 \left[\frac{\omega ^2}{g^2}+\cos ^2(\epsilon T \Omega )\right]+\frac{1}{\omega ^2}}, &\text{ if } \epsilon < \epsilon^\ast,\label{eq:worst-case-succ-omega}
\end{align}
where $\epsilon^\ast :=\pi/(2\Omega T)$.
Note that the bound (\ref{eq:worst-case-failed-omega}) scales as $g^2/(\nu\omega^2T^2)$ when $T$ and $\nu$ are sufficiently large, which is a factor of $g^2/\omega^2$ worse than the Heisenberg scaling $1/(\nu T^2)$ in the strong-coupling limit $g\gg\omega$. Intuitively, to be close to the Heisenberg scaling, the system needs to be decoupled from the environment before the measurement. However, if the clock uncertainty $\epsilon$ is large enough so that $T\epsilon$ scales as the period of evolution, which is approximately $\Theta(1/g)$, the experimenter will not be able to choose a better-than-random timing of decoupling, rendering the final estimate to be much less accurate. 
We further remark that the assumption on $f$ to be a delta function can be lifted by using the convexity of the QFI, and the bound (\ref{eq:worst-case-failed-omega}) holds for generic $f$.

\begin{figure}
    \centering
    \includegraphics[width=0.95\linewidth]{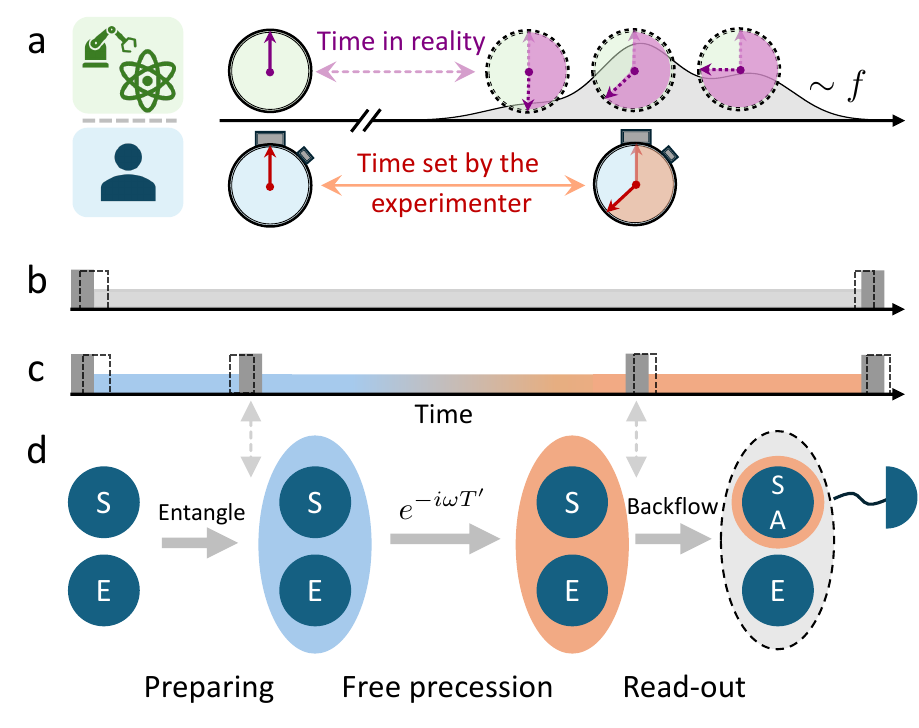}
    \caption{\textbf{The clock uncertainty model and the control strategy.} We consider a generic model of clock uncertainty where any time interval, set by the experimenter using an imperfect stopwatch, is subject to some random deviation [see \textbf{(a)}]. 
    Consequently, the total interrogation time [see \textbf{(b)}], the duration of control pulses and the length of the interval between any consecutive pulses [\textbf{(c)}] are all subject to uncertainty.
    Under this model, we propose a control strategy to enhance the performance of frequency estimation [see \textbf{(d)}], where the system (S) is coupled to an environment (E) and an ancillary qubit (A). The dashed arrows between \textbf{(c)} and \textbf{d)} indicate the timing of inserting two intermediate control pulses (grey boxes in the middle). 
    }
    \label{fig1}
\end{figure}

Now, we consider the CE case, where intermediate control is allowed.
We show that the Heisenberg scaling of the statistical error can be recovered by using a few intermediate control pulses \cite{explain-fast-ctrl} and one ancillary qubit. As shown in \Cref{fig1}~(c) and (d), the process involves four control pulses. A first pulse is applied to initialize the system. After the system-environment entanglement is established, a second pulse is applied to the system, mapping the system-environment state into a suitable subspace that is sensitive only to the evolution induced by $H_{\mathsf{S}}(\omega)$. Towards the end of the interrogation, a third pulse is applied to the joint system of the system and the ancilla, facilitating the information backflow from the environment to the joint system. At last, a fourth pulse, combined with a measurement in the computational basis, is applied to realize the final measurement.
A more detailed explanation will be provided later. The performance of our method is summarized in the following theorem:

\begin{theorem}\label{thm:MSE-CE}
Assuming the clock uncertainty to be bounded, i.e., $\mathcal{N}=\{f|f(u)=0~\forall~u\notin [-\epsilon, \epsilon]\}$, and $\nu$ to be sufficiently large, the loss function of the CE case, defined by \Cref{eq:loss-max-min-mse}, is upper bounded as:
    \begin{align}\label{eq:MSE-CE}
        \lce \le \frac{1+O(\eta)}{\nu T'^2} + \frac{16\csc^2(\omega T')\eta^2+O(\eta^3)}{T'^2},
    \end{align}
    where $\eta={3\pi|\omega|}/({8g})+{11\pi\epsilon}/{4}+{\epsilon|\omega| T'}/{2}$ and $T':=T-{3\pi}/{(4g)}$.
\end{theorem}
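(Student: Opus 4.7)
The plan is to exhibit an explicit strategy in $\ce$ whose worst-case MSE meets the stated bound; since $\lce$ is defined by a minimization over strategies, providing such an upper-bounding construction suffices. I would start by writing out the joint system–environment–ancilla state trajectory under the four-pulse protocol of \Cref{fig1}(d) in the idealized case, then reinstate the TDI segment by segment and invoke the biased CRB of \Cref{eq:bias-crb}.

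\textbf{Ideal analysis.} First I would compute the output state when all pulse widths and waiting intervals are at their nominal values. The second pulse is timed to coincide with a full swap cycle of $H_\mathsf{SE}$, after which the S–E pair sits in a subspace on which the interaction acts trivially; the remaining free evolution therefore imprints only an $\omega$-dependent phase on the appropriate basis, where the offset $3\pi/(4g)$ in $T' = T - 3\pi/(4g)$ absorbs the total pulse-duration budget. The third (backflow) pulse concentrates the accumulated phase onto the system–ancilla pair in a Bell-like superposition, and the final pulse plus computational measurement produces a binary outcome whose probabilities read $p_\pm(\omega)=\tfrac{1}{2}\bigl(1\pm\cos(\omega T')\bigr)$. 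A direct computation then gives $\mathcal{F}^{\rm Q}_\omega \sim T'^2$ in the limit $\epsilon,\omega/g\to 0$, yielding the Heisenberg-scaling term $1/(\nu T'^2)$.

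\textbf{TDI perturbation.} I would then replace every nominal duration $t_k$ in the protocol by $t_k(1+u_k)$ with $u_k\in[-\epsilon,\epsilon]$ and Taylor-expand the resulting output state to first order in the small parameters. Three independent sources of deviation appear: first-order errors in the four pulse-rotation angles (proportional to their nominal widths), an error in the interrogation phase $\omega T'(1+u)$, and a residual finite-$g$ mismatch in the timing designed to close the swap cycle. Adding their worst-case contributions reproduces exactly $\eta = 3\pi|\omega|/(8g)+11\pi\epsilon/4+\epsilon|\omega|T'/2$, with the first term coming from the finite-$g$ decoupling error, the second from the accumulated pulse-width errors, and the third from the interrogation-time error entering the $\omega T'$ phase. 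The perturbed probabilities become $p_\pm(\omega)=\tfrac{1}{2}\bigl(1\pm\cos(\omega T')\bigr)+O(\eta)$, whose $\omega$-derivative gives $\mathcal{F}_\omega = T'^2\bigl(1+O(\eta)\bigr)$, while the shift of the mean outcome relative to the ideal $\omega$-encoding translates into a bias $b=\Theta(\eta)\csc(\omega T')$ obtained by inverting the $\cos(\omega T')$ likelihood.

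\textbf{Closing the bound.} Substituting these estimates into \Cref{eq:bias-crb} makes the statistical piece $(1+O(\eta))/(\nu T'^2)$ once one notes $|\partial b/\partial\omega|=O(\eta)$, and $b^2$ supplies the second term $16\csc^2(\omega T')\eta^2/T'^2+O(\eta^3)$. These estimates are uniform in each $u_k\in[-\epsilon,\epsilon]$, so taking $\max_{f\in\mathcal{N}}$ preserves them, and taking the min over strategies by evaluating at our explicit construction yields the claimed bound on $\lce$. The main obstacle I foresee is the bookkeeping of how four distinct TDI variables propagate through the non-commuting unitaries of the protocol: one must verify that cross-terms between pulse-width and free-evolution errors really combine into the stated integer coefficients inside $\eta$, and that no uncontrolled second-order contribution sneaks into $b$ to spoil the clean $\csc^2(\omega T')$ dependence — the rest of the analysis is a uniform application of the biased CRB.
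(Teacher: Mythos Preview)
Your overall architecture matches the paper's: exhibit the explicit four-pulse strategy, compute the ideal output probabilities $\tfrac12(1\pm\cos\omega T')$, perturb segment-by-segment to collect the error budget $\eta$, and assemble the MSE as variance plus bias$^2$. Two points deserve correction, however.

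\textbf{Direction of the CRB.} In the closing step you propose to ``substitute into \Cref{eq:bias-crb}'' to obtain the statistical piece. But \Cref{eq:bias-crb} is a \emph{lower} bound on the MSE, so it cannot by itself yield the \emph{upper} bound required for $\lce$. The paper does not invoke the CRB here at all: it fixes the concrete estimator $\hat\omega(\bar x)=\tfrac{1}{T'}\cos^{-1}(\bar x)$ on the sample mean of the $\pm 1$ outcomes and computes its variance directly by error propagation, $\Var(\hat\omega)=|\partial\hat\omega/\partial\bar x|^2\,\Var(\bar x)=\bigl(1+O(\eta)\bigr)/(\nu T'^2)$, using $\Var(\bar x)=\Var(x)/\nu$ and $\Var(x)=1-(2p-1)^2+O(\eta)$. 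The ``sufficiently large $\nu$'' hypothesis enters only to justify $\bar x\approx \mathrm E(x)$ in this error-propagation step, not to invoke asymptotic CRB saturation. Your plan would need either this direct computation or a separate argument that your estimator actually achieves the CRB.

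\textbf{Origin of the $3\pi|\omega|/(8g)$ term.} You describe this as a ``finite-$g$ decoupling error'' or a ``mismatch in the timing designed to close the swap cycle,'' grouping it with the TDI contributions. In the paper's accounting it is \emph{not} a TDI error: it survives even at $\epsilon=0$ and arises because during the two interaction-driven segments (durations $\pi/(4g)$ and $\pi/(2g)$, used to implement the partial and full SWAP respectively) the system Hamiltonian $\tfrac{\omega}{2}Z_1$ also acts, contributing unwanted rotations of size $|\omega|\pi/(8g)$ and $|\omega|\pi/(4g)$ whose sum is $3\pi|\omega|/(8g)$. This distinction matters when you later take $\max_{f\in\mathcal N}$, since this term is deterministic rather than worst-case over $f$. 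Relatedly, your description ``the second pulse is timed to coincide with a full swap cycle'' is off: the first interaction segment is a \emph{quarter} SWAP creating $|01\rangle+i|10\rangle$, after which the $S_1X_1$ pulse rotates it into the degenerate eigen-subspace $|00\rangle+|11\rangle$; the full-SWAP segment comes only at the end, before readout.
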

The proof can be found in Ref.~\cite{SM}. 
From Theorem \ref{thm:MSE-CE}, it is clear that the variance of the estimate of the CE case [the first term in \Cref{eq:MSE-CE}] coincides with that of the IF case in the leading order, which is significantly smaller than that of the FE case.
Note that we do not assume the control pulses to be perfect:
The clock uncertainty will influence the length of the time intervals between the control pulses, as well as the pulses' length [as illustrated in \Cref{fig1}~(b) and (c)]. As a consequence, the final estimate of the CE case will also suffer from a small bias, which is nevertheless close to the hardware limit $O(\epsilon^2\omega^2)$ up to a constant factor.

For illustration, we run a numerical simulation of this task for FE and CE cases.
The parameters are configured to  $\omega=1/300, g=10, T=80\pi,\nu =10^4$ and the function $f$ is set to a uniform distribution in the interval $[-\epsilon,\epsilon]$.
We denote by $\mathcal{R}_{\fe}$ the MSE of the FE strategy that is optimal in the absence of clock uncertainty ($\epsilon=0$) \cite{explain-epsilon0-optimal} and by $\mathcal{R_{\ce}}$ the MSE of our CE strategy.
To see the reliability of the strategies, we evaluate the performances of the strategies for different $\omega$ sampled uniformly from the interval $[1/500,1/100]$.
For each $\epsilon$, the standard deviation of the MSEs for $10^2$ samples corresponding to the FE (CE) strategy is calculated, shown as the red (blue) shaded area in \Cref{fig2}~(a).
The CE strategy is illustrated in \Cref{fig2}~(b).
From \Cref{fig2}~(a), it is clear that $\mathcal{R}_\fe$ is lower than $\mathcal{R}_\ce$ by several magnitudes. 
The gap is considerably larger than the prediction of the theoretical bounds (\ref{eq:worst-case-failed-omega}) and (\ref{eq:MSE-CE}).
This is because the bias comprises a significant portion of the total MSE of the FE strategy, which is not properly reflected in the bound (\ref{eq:worst-case-failed-omega}).
The MSE of the IF case is set to $1/(\nu T^2)+\epsilon^2\omega^2$ in the plot \cite{SM}.
Remarkably, the MSE of our CE strategy is close to that of the IF case.
This justifies the optimality of the control, as it nearly eliminates the impact of the environment.

The underlying reasons why our control strategy is effective are the following. The system is first prepared in a state so that it gets entangled with the environment through their interaction [the first step in \Cref{fig1}~(d)]. 
We then apply a local control pulse to send the state into the stable subspace of the interaction Hamiltonian. This subspace, spanned by $\{\ket{0^S0^E},\ket{1^S1^E}\}$, is a {degenerate eigen-subspace} (DES) of the interaction Hamiltonian. 
It is noteworthy that the states within the DES can be entangled, distinguishing it from the decoherence-free subspace \cite{lidarDecoherenceFreeSubspacesQuantumComputation1998}.
Meanwhile, entangled states in this DES, such as $(\ket{0^S0^E}+\ket{1^S1^E})/\sqrt{2}$, can evolve within the subspace under the system Hamiltonian [the second step in \Cref{fig1}~(d)].
During the subsequent free precession, the frequency $\omega$ is encoded as a phase factor with maximal sensitivity, which is nevertheless not locally measurable. Therefore,
before the measurement, we apply another control to transition the state out of the DES, allowing information to flow back to the system [the third step in \Cref{fig1}~(d)].
In contrast, in the FE case, the system Hamiltonian $(\ketbra{0^S}-\ketbra{1^S})\otimes \mathds{1}^E$ will be reduced to $\ketbra{0^S 0^E}-\ketbra{1^S 1^E}$ by the \textit{Zeno dynamics} \cite{facchiQuantumZenoDynamicsMathematicalPhysicalAspects2008,burgarthGeneralizedAdiabaticTheoremStrongCouplingLimits2019}, leading to information loss even in the absence of clock uncertainty \cite{classical-interpretation-SWAP}. Our control strategy uses the DES to effectively mitigate this effect, resulting in a performance enhancement.

\begin{figure}
    \centering
    \includegraphics[width=0.95\linewidth]{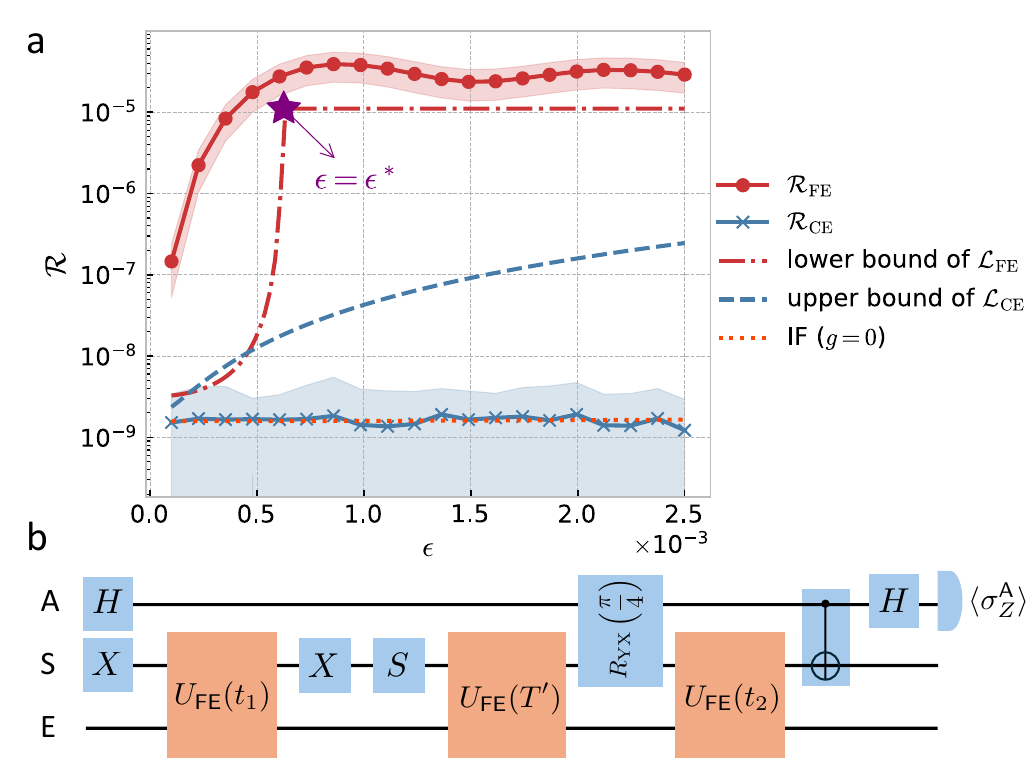}
    \caption{\textbf{Numerical results for finite $\nu$ and the CE strategy.} \textbf{(a)} Numerical results for both the FE case and the CE case with a uniform $f$: The IF case corresponds to setting the system-environment interaction strength to zero. The heights of the shaded areas indicate the standard deviation resulting from different choices of $\omega$. Note that in some instances, the error in the CE case is even smaller than in the IF case, which can be attributed to the CE case exhibiting a larger bias compared to the IF scenario that accidentally benefits the estimation in the non-asymptotic regime. \textbf{(b)} The circuit representation of our CE strategy: The gates $H,~S,$ and$~X$ are the Hadamard gate, the $\pi/2$-phase gate, and the Pauli-$X$ gate, respectively. The unitary $U_{\mathsf{FE}}$ is the free evolution under the total Hamiltonian. The lengths of the intervals between the control operations are $t_1=\pi/(4g)$ and $t_2=\pi/(2g)$, the initial state is $\ket{000}$,
    and the rotation gate is defined as $R_{\rm YX}(\theta):=\exp(-i\sigma_Y^{\mathsf{A}}\sigma_X^{\mathsf{S}}\theta)$ with $\sigma_P^{\mathsf{A}~(\mathsf{S})}$ denoting the Pauli-$P$ operator on the ancilla (system).
    After the measurement, we use the relation $\braket{\sigma_{Z}^{\mathsf{A}}}=\cos(\omega T'/2)$ to determine the final estimate of $\omega$.}
    \label{fig2}
\end{figure}

\medskip

\textit{Experimental implementation.}--- To demonstrate the advantage of our CE strategy in practice, we conduct experiments with NMR systems.
An important feature of NMR systems is that observations are made on large ensembles rather than individual quantum states. For quantum sensing, the same strategy will be performed simultaneously on a large number (close to the Avogadro constant) of identical systems. This is effectively to consider the performance of the strategies in the large repetition limit, i.e., $\nu\to\infty$, where the systematic error becomes the only prominent term. We remark that this scenario is complementary to that of the previous numerical simulation, which focuses on the finite $\nu$ case where both the systematic error and the statistical error have nontrivial impact on the performance. 

\begin{figure}
    \centering
    \includegraphics[width=0.75\columnwidth]{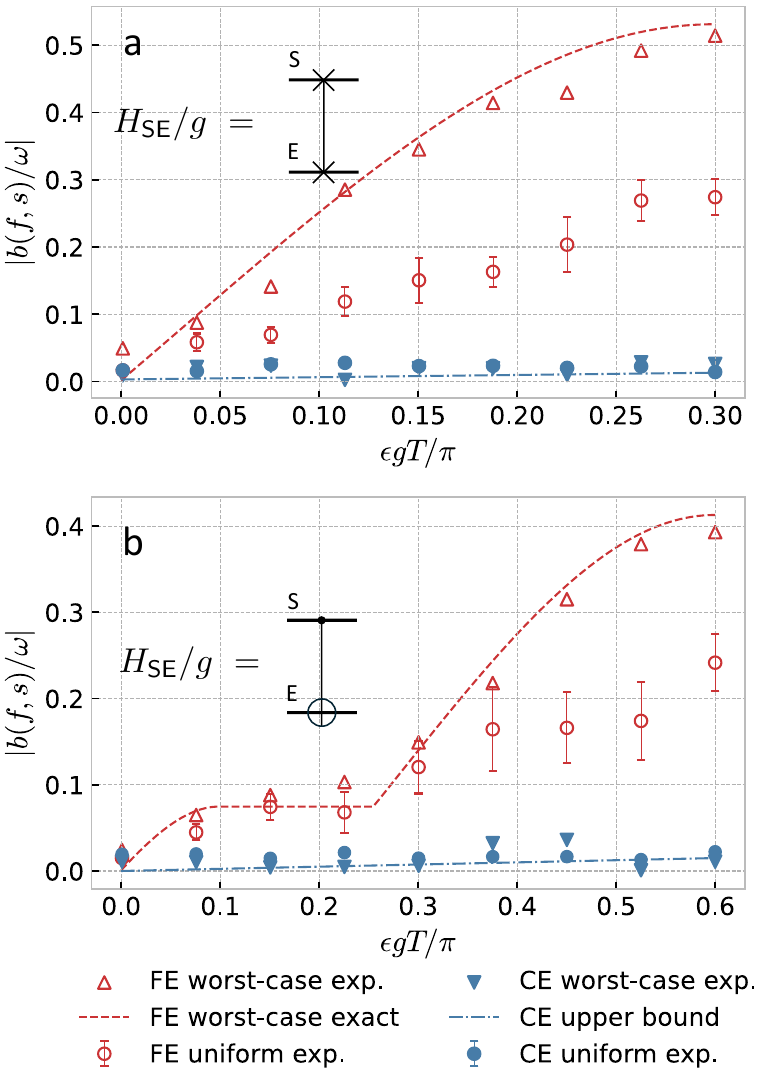}
    \caption{\textbf{Experimental demonstration on an NMR platform.} In \textbf{(a)} and \textbf{(b)}, the relative errors are plotted as functions of the clock uncertainty for the SWAP interaction and the CNOT interaction, respectively. Here the label \textsf{S} (\textsf{E}) denotes the system (environment). We consider both when the clock uncertainty distribution is fixed to be the uniform distribution over $[-\epsilon,\epsilon]$ and when it can be any bounded distribution over this interval. The length of the errorbar indicates the standard error in 10 repetitions of the experiment. }
    \label{fig:experiment-result}
\end{figure}

We conduct our experiments using a Bruker 600 MHz spectrometer at room temperature. 
The sample comprises $^{13}$C-labeled trans-crotonic acid dissolved in acetone-d$_6$, consisting of three $^{1}$H and four $^{13}$C nuclear spins, which collectively form a 7-qubit quantum processor. 
More details about this experimental platform are available in Ref.~\cite{longEntanglementEnhancedQuantumMetrologyColoredNoiseQuantum2022}.

In our experiment, we utilize the spins of three $^{13}$C nuclei --- C$_1$, C$_2$, and C$_3$ --- to simulate the ancilla, the system, and the environment, respectively. 
We configure the experimental parameters as $\omega=0.01~\text{kHz}$, $g=10~\text{kHz}$, and $T=80\pi~\text{ms}$. 
To explore different scenarios, we investigate two distinct interaction Hamiltonians: SWAP and CNOT \cite{explain-interaction-channel}, along with two different types of clock uncertainty.
The CE strategy in the SWAP interaction case aligns with the illustration presented in \Cref{fig2}~(b), while the one in the CNOT interaction case can be found in Ref.~\cite{SM}.
We first consider the uncertainty distribution to be fixed to the uniform distribution in the interval $[-\epsilon,\epsilon]$.
The error is measured by the relative bias $\left|{b(f,s)}/{\omega}\right|$, where $f$ is set to the uniform distribution.
We further consider the worst-case over all bounded clock uncertainty distributions, and the relative bias becomes 
\begin{align}\label{eq:exp-worst-case-MSE}
    \max_{f\in\mathcal{N}} \left|{b(f,s)}/{\omega}\right|,~s\in \fe \text{ or } \ce,
\end{align}
which is refered to as \textit{worst-case} error.
The strategy $s\in\fe$ is fixed to be the optimal one when $\epsilon=0$ \cite{explain-epsilon0-optimal}.
The outcomes of our experiments are presented in \Cref{fig:experiment-result}.
It is clear that the advantage is still significant in the demonstration for both types of interactions and both types of clock uncertainty, as shown in \Cref{fig:experiment-result}~(a) and (b).
We remark that the experimental finding complements the theoretical analysis, as the advantage of control cannot be manifested by comparing the theoretical bounds \Cref{eq:MSE-CE,eq:worst-case-failed-omega} in the $\nu \rightarrow \infty$ regime.
Note that in some instances of the CE case, the experimental error exceeds the theoretical upper bound. 
This can be attributed to imperfections of the experimental devices other than the clock uncertainty, such as the electromagnetic noise from electronic components, radiofrequency coils, and the associated circuits, which are not considered in the theoretical analysis. 
We emphasize that the overall error in the CE case remains relatively small despite all these additional imperfections.

\medskip

\textit{Conclusion and outlook.}---In this Letter, we consider frequency estimation under clock uncertainty that influences the entire sensing process.
We introduce an effective control strategy to mitigate the 
effects of clock uncertainty and recover the optimal sensing precision.
We prove and numerically verify the significant advantages of our CE strategy over FE strategies in the finite repetition limit ($\nu < \infty$).
We further validate our theory 
in the asymptotic regime ($\nu \rightarrow \infty$) on an NMR processor.

The distinction between the FE case and the CE case can also be interpreted using the ergodicity of quantum systems. In the FE case, the final state exhibits strong ergodicity. As a consequence, the time averaging induced by the clock uncertainty results in severe decoherence. 
In contrast, the final state in the CE case is much less sensitive to this time averaging, thanks to the ergodicity reduction imposed by the control operations. Ergodicity-breaking techniques have been an interesting direction of research studied recently in many-body quantum systems \cite{bernienProbingManybodyDynamics51atomQuantumSimulator2017,serbynQuantumManyBodyScarsWeakBreakingErgodicity2021}. Exploring how the connection with quantum sensing could be established may substantially extend the scope of application of our results to these systems.

In our present model, the noise effect of clock uncertainty has been treated semi-classically, whereas via the foundational models of autonomous quantum clocks \cite{malabarbaClockdrivenQuantumThermalEngines2015,erkerAutonomousQuantumClocksDoesThermodynamicsLimit2017,woodsAutonomousQuantumMachinesFiniteSizedClocks2019,woodsQuantumClocksAreMoreAccurateClassical2022,woodsAutonomousQuantumDevicesWhenAreThey2023,meierFundamentalAccuracyResolutionTradeTimekeepingDevices2023,yangUltimateLimitTimeSignalGeneration2020}, one can also treat this effect quantumly. Explicitly, the imperfect clocks can be modeled as quantum clocks built on finite-dimensional quantum systems, and the clock uncertainty becomes a result of the quantum correlation between the clock systems and the probe system, just as in the case of environment-induced decoherence.
This hints at the possibility of treating the problem of quantum sensing with imperfect control fully quantumly, using the framework of higher-order transformations \cite{chiribellaQuantumCircuitsArchitecture2008,chiribellaTransformingQuantumOperationsQuantumSupermaps2008,chiribellaTheoreticalFrameworkQuantumNetworks2009} and quantum comb metrology \cite{chiribellaOptimalNetworksQuantumMetrologySemidefinitePrograms2012,yangMemoryEffectsQuantumMetrology2019,altherrQuantumMetrologyNonMarkovianProcesses2021} to establish the ultimate precision limit.



\medskip

\textit{Acknowledgment.}---Z.-S. Li would like to acknowledge Prof. Mio Murao for valuable comments on the relationship between clock uncertainty and thermalization.
This work is supported by an Innovation Program for Quantum Science and Technology (project no.~2023ZD0300600), by the National Natural Science Foundation of China via project no.~12204230 and the Excellent Young Scientists Fund (Hong Kong and Macau) project no.~12322516, 
and the Hong Kong Research Grant Council (RGC) via the Early Career Scheme (ECS) grant 27310822 and the General Research Fund (GRF) grant 17303923.

\bibliography{mylib.bib}



\begin{widetext}
    \medskip
    \begin{center}
        \large{\textbf{Supplemental Material}}
    \end{center}
    \medskip
    The supplemental material is organized as follows: First, we discuss the preliminaries related to biased estimation in \Cref{app:prelim}.
    Next, we provide a warm-up by analyzing the interaction-free case in \Cref{app:the-interaction-free}.
    Subsequently, we explain how we bound the errors in two distinct cases: Free Evolution (FE) case in \Cref{app:free-evolution} and Control-Enhanced (CE) case in \Cref{app:controlled-evolution}, respectively.
    At the end, we provide the details on the measurement in \Cref{app:detail-measurement} and the estimation procedure with CNOT interaction in \Cref{app:cnot-interaction}.
    
    \tableofcontents

    \section{Preliminaries}\label{app:prelim}
    
    \subsection{Biased Cram\'er-Rao lower bound}
    \label{app:biased-crb}
    The original Cram\'er-Rao lower bound (CRB) is valid only if the estimator is unbiased.
    In this work, the potential bias generated by the clock uncertainty makes the unbiased condition impossible to achieve, which necessitates considering biased CRB.
    Here we just provide a simple proof of the biased CRB (see Ref.~\cite{treeDetectionEstimationModulationTheory2001} for more details).
    \begin{corollary}
        If $\hat{\theta}$ is a biased estimator of $\theta$, i.e. $E(\hat{\theta})=\theta+b(\theta)$ with $b(\theta)$ being the bias, then
        \begin{align}
            \Var(\hat\theta)\ge \frac{\left(1+\partial b(\theta)/\partial \theta\right)^2}{\mathcal{F}_{\theta}}.
        \end{align}
    \end{corollary}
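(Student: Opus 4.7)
The plan is to derive the biased Cram\'er--Rao bound by adapting the standard Cauchy--Schwarz argument used for the unbiased case, with the bias term $b(\theta)$ tracked carefully through the differentiation of the normalization/expectation identity. Throughout I assume enough regularity on the likelihood $p(x\mid\theta)$ to exchange differentiation and integration, which is the usual tacit hypothesis of the CRB.

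First, I would write the expectation condition explicitly as
\begin{align*}
\theta + b(\theta) \;=\; \E(\hat\theta) \;=\; \int \hat\theta(x)\, p(x\mid\theta)\, dx.
\end{align*}
Differentiating both sides with respect to $\theta$ and moving the derivative inside the integral yields
\begin{align*}
1 + \frac{\partial b(\theta)}{\partial \theta} \;=\; \int \hat\theta(x)\, \frac{\partial p(x\mid\theta)}{\partial \theta}\, dx \;=\; \int \hat\theta(x)\, S(x,\theta)\, p(x\mid\theta)\, dx,
\end{align*}
where $S(x,\theta) := \partial_\theta \log p(x\mid\theta)$ is the score. Using the companion identity $\E[S]=\int \partial_\theta p\, dx = \partial_\theta \int p\, dx = 0$, I would rewrite the previous display as a covariance,
\begin{align*}
\Cov(\hat\theta, S) \;=\; \E[\hat\theta\, S] - \E[\hat\theta]\,\E[S] \;=\; 1 + \frac{\partial b(\theta)}{\partial \theta}.
\end{align*}

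Finally, applying the Cauchy--Schwarz inequality to this covariance gives $\Cov(\hat\theta,S)^2 \le \Var(\hat\theta)\,\Var(S)$, and since by definition $\Var(S) = \mathcal{F}_\theta$ is the Fisher information, rearrangement yields
\begin{align*}
\Var(\hat\theta) \;\ge\; \frac{\bigl(1+\partial b(\theta)/\partial \theta\bigr)^2}{\mathcal{F}_\theta},
\end{align*}
which is exactly the claimed bound; the unbiased case $b\equiv 0$ is recovered as a special instance. There is no real obstacle here: the only delicate step is justifying the interchange of differentiation and integration, which is the standard regularity assumption behind every form of the Cram\'er--Rao inequality and which I would simply invoke rather than re-derive.
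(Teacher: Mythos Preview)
Your proof is correct and follows essentially the same approach as the paper: differentiate the expectation identity for $\hat\theta$ with respect to $\theta$, then apply Cauchy--Schwarz to bound the resulting inner product by $\Var(\hat\theta)\,\mathcal{F}_\theta$. The only cosmetic difference is that the paper centers $\hat\theta$ by $\theta+b(\theta)$ before differentiating, whereas you differentiate first and then center via $\E[S]=0$; these are equivalent rearrangements of the same argument.
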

    \begin{proof}
        The proof is almost the same as the proof of the CRB. 
        Starting with the biased condition, we have
        \begin{align}
            \int d\mathbf{R} p(\mathbf{R}|\theta)\left[\hat\theta(\mathbf{R})-\theta-b(\theta)\right]=0.
        \end{align}
        By differentiating the above equation w.r.t. $\theta$, we have
        \begin{align}
            \int d\mathbf{R} \frac{\partial p(\mathbf{R}|\theta)}{\partial\theta}\left[\hat\theta(\mathbf{R})-\theta-b(\theta)\right]+\int d\mathbf{R} p(\mathbf{R}|\theta)(-1-\partial_\theta b(\theta))=0.
        \end{align}
        Applying the same trick as the proof of the CRB, we have
        \begin{align}
            \int d\mathbf{R} p(\mathbf{R}|\theta)\left[\hat\theta(\mathbf{R})-\theta-b(\theta)\right]\partial_\theta \ln p(\mathbf{R}|\theta)=1+\partial_\theta b(\theta).
        \end{align}
        Rewriting, we have
        \begin{align}
            \int d\mathbf{R} \left[\sqrt{p(\mathbf{R}|\theta)} \partial_\theta \ln p(\mathbf{R}|\theta)\right]\left[\sqrt{p(\mathbf{R}|\theta)}\left[\hat\theta(\mathbf{R})-\theta-b(\theta)\right]\right]=1+\partial_\theta b(\theta).
        \end{align}
        By applying the Cauchy-Schwarz inequality, we have
        \begin{align}
            \int d\mathbf{R} p(\mathbf{R}|\theta)\left[\hat\theta(\mathbf{R})-\theta-b(\theta)\right]^2\times\int d\mathbf{R} p(\mathbf{R}|\theta)\left[\partial_\theta \ln p(\mathbf{R}|\theta)\right]^2\ge \left(1+\partial_\theta b(\theta)\right)^2,
        \end{align}
        or equivalently
        \begin{align}
            \Var[\hat\theta]\ge \frac{\left(1+\partial b(\theta)/\partial \theta\right)^2}{\mathcal{F}_{\theta}},
        \end{align}
        which completes the proof.
    \end{proof}
    
    Another version of biased CRB is written in terms of mean squared error, which is given by
    \begin{align}
        E[(\hat\theta-\theta)^2]\ge \frac{\left(1+\partial b(\theta)/\partial \theta\right)^2}{\mathcal{F}_{\theta}}+b(\theta)^2,
    \end{align}
    where $E[(\hat\theta-\theta)^2]$ is the mean squared error.
    \begin{proof}
        According to the CRB, we have
        \begin{align}
            \Var[\hat\theta]=E[(\hat\theta-E(\hat\theta))^2]\ge \frac{\left(1+\partial b(\theta)/\partial \theta\right)^2}{\mathcal{F}_{\theta}},
        \end{align}
        which means that the biased estimator $\hat\theta$ is unbiased to the parameter $\theta+b(\theta)$.
        The LHS can be equivalently written as
        \begin{align}
            E[(\hat\theta-E(\hat\theta))^2]&=E[(\hat\theta-\theta-b(\theta))^2]\\
            &=E[(\hat\theta-\theta)^2-2b(\theta)(\hat\theta-\theta)+b(\theta)^2]\\
            &=E[(\hat\theta-\theta)^2]-2b(\theta)\times b(\theta)+b(\theta)^2\\
            &=E[(\hat\theta-\theta)^2]-b(\theta)^2,
        \end{align}
        where the first and third equalities are obtained by utilizing $b(\theta)=E(\hat\theta)-\theta$.
        Then we have
        \begin{align}
            E[(\hat\theta-\theta)^2]\ge \frac{\left(1+\partial b(\theta)/\partial \theta\right)^2}{\mathcal{F}_{\theta}}+b(\theta)^2.
        \end{align}
    \end{proof}
    
    \section{The interaction-free case}\label{app:the-interaction-free}
    In the absence of interaction, the Hamiltonian can be written as
    \begin{align}
        H = \omega \sigma_Z / 2,
    \end{align}
    and its corresponding unitary for evolution time $T$ is
    \begin{align}
        U(T) = \exp(-i \omega \sigma_Z T / 2).
    \end{align}
    Considering an input state $\rho$, due to the clock uncertainty, the final state is the weighted average in time:
    \begin{align}
        \rho' = \int du f(u) U[T(1+u)] \rho U[T(1+u)]^\dagger.
    \end{align}
    We can written the state $\rho$ in the eigenbasis of $\sigma_Z$:
    \begin{align}
        \rho' &= \int du f(u) U[T(1+u)]\sum_{jk} \rho_{jk} \ketbra{j}{k} U[T(1+u)]^\dagger\\
        &=\int du f(u) \sum_{jk} e^{-i\omega(\lambda_j-\lambda_k)(1+u)T/2}\rho_{jk}\ketbra{j}{k},
    \end{align}
    where $\{\lambda_j\}$ are the eigenvalues of $\sigma_Z$.
    It is straightforward that the optimal quantum Fisher information (QFI) can be achieved by choosing $\ketbra{+}$ as the input state, which yields the QFI
    \begin{align}
        e^{-2 \Re(\gamma)}T^2,
    \end{align}
    where $\gamma=-\ln \left[\int du f(u) e^{-i\omega u T}\right]$.
    In the parameter configuration of our numerical simulation, the QFI of the interaction-free case is close to $T^2$ within a small correction.
    
    In addition, the imaginary part of $\gamma$ can lead to additional bias, which cannot be eliminated by increasing the interrogation time or the number of repeat measurements.
    This bias can be understood by the correlation between frequency and time.
    To determine the frequency of some periodic event, intuitively, we need to estimate the phase accumulated in a certain time, then divide it by time to estimate the frequency.
    If the clock is subject to some uncertainty, it is unavoidable that the measurement of frequency becomes inaccurate.
    
    The systematic error is given by
    \begin{align}
        \lim_{\nu\rightarrow \infty}\mathcal{R}_{\text{interaction-free}}= O(\epsilon^2\omega^2),
    \end{align}
    where $\mathcal{R}_{\text{interaction-free}}$ is the mean squared error in the interaction-free case.
    Then we show how to obtain the result.
    We first assume we have already determined the value of $\phi=\omega T$.
    We estimate $\omega$ by estimating $\phi$ and then dividing the result by the time, i.e.,
    \begin{align}
        E(\hat{\omega})=\frac{E(\hat{\phi})}{\tilde{T}}=\frac{\omega T}{T+\Delta T},
    \end{align}
    where the $\Delta T$ is bounded by $\epsilon T$. Then we have the result
    \begin{align}
        \left[E(\hat\omega)-\omega\right]^2 \le \epsilon^2\omega^2,
    \end{align}
    or equivalently,
    \begin{align}
        \lim_{\nu\rightarrow \infty}\mathcal{R}_{\text{interaction-free}}= O(\epsilon^2\omega^2).
    \end{align}
    The result shows that this estimation protocol has a bias term that cannot be eliminated by either increasing the number of measurements or the duration of the process.
    We will show in the FE case, the error can be way larger than this bias term $\epsilon^2 \omega^2$, while the CE case can provide an error as small as this bias term.
    
    \section{The free evolution (FE) case}\label{app:free-evolution}
    
    \subsection{Optimal biased estimation}\label{app:opt-biased-est}
    
    The biased estimation in most cases is worse than the unbiased estimation. 
    However, if the derivative of the bias is negative, it is possible that the MSE in the biased estimation may be smaller than in the unbiased case.
    Although this reduction in MSE by biased estimation usually depends on the knowledge about unbiased estimator, we cannot simply ignore this possibility.
    
    We will use a technique to lower bound the MSE of biased estimation (see Ref.~\cite{kayRethinkingBiasedEstimationLectureNotes2008} for more details).
    Recall the biased CRB, we have
    \begin{align}
        \mathrm{E}\left[(\hat \omega_b-\omega)^2\right]\ge \frac{(1+\partial b/\partial \omega)^2}{\nu\mathcal{F}_\omega} + b^2,
    \end{align}
    where $\hat\omega_b$ is the biased estimator.
    The bias is $b:=\mathrm{E}(\hat \omega_b)-\omega$ by definition.
    Here we introduce the parameter $\omega_0$ as the reference value of $\omega$.
    The true value of $\omega$ is close to $\omega_0$.
    Naturally, we can expand the bias in power series at $\omega=\omega_0$,
    \begin{align}
        \mathrm{E}(\hat\omega_b) - \omega = c + m (\omega-\omega_0) + O[(\omega-\omega_0)^2],
    \end{align}
    where the zeroth order $c:=\mathrm{E}(\hat \omega_b)|_{\omega = 0}$.
    We let $c=0$, which means the bias estimator provides an accurate estimation when $\omega=\omega_0$, which only lifts the accuracy and relaxes the bound.
    It is important to note that the first derivative of the bias will influence the statistical error.
    For $|\omega-\omega_0|\ll 1$, we can ignore the higher-order terms and write the bias in the following form
    \begin{align}
        b=m(\omega-\omega_0).
    \end{align}
    By substitution, we have
    \begin{align}
        \mathrm{E}\left[(\hat\omega_b-\omega)^2\right]\ge \frac{(1+m)^2}{\nu\mathcal{F}_\omega} + m^2(\omega-\omega_0)^2.
    \end{align}
    We can find the minimum of the MSE by differentiating the MSE with respect to $m$ and set the result equal to zero, which yields
    \begin{align}
        m^* = -\frac{1}{1 + \nu\mathcal{F}_\omega (\omega_0 - \omega)^2},
    \end{align}
    and
    \begin{align}
        \min_{m}\mathrm{E}\left[(\hat\omega_b-\omega)^2\right] \ge \frac{1}{\nu \mathcal{F}_\omega+ (\omega -\omega_0)^{-2}}.
    \end{align}
    This result demonstrates that the estimation process can benefit from the bias.
    Notably, in the non-asymptotic case, when $\nu\mathcal{F}_\omega \ll (\omega-w)^{-2}$, the RHS of the equation simplifies to $(\omega-\omega_0)^{2}$.
    In the asymptotic case, where $\nu\mathcal{F}_\omega \gg (\omega-\omega_0)^{-2}$, this expression reduces to the unbiased Cramér-Rao bound (CRB).
    It is important to note that this minimization relies on both the unknown parameter $\omega$ and the unbiased estimator, allowing for the design of a biased estimator to enhance the estimation accuracy.
    In our study, we utilize this result to establish bounds on the Mean Squared Error (MSE).
    In the previous result, the only assumption is $|\omega-\omega_0|\ll 1$, which is reasonable given that the theorem of CRB is locally applicable.
    In subsequent analysis, we set $\omega_0=0$ and $|\omega|\ll 1$.
    Note that this assumption will not alter the overall results significantly.
    
    \subsection{The upper bound for the QFI in FE}\label{app:upper-bound-QFI-FE}
        The matrix representation of the Hamiltonian is
        \begin{align}
            H_{\rm tot}(\omega)=
            \left(
            \begin{array}{cccc}
             g+\frac{\omega }{2} & 0 & 0 & 0 \\
             0 & \frac{\omega }{2} & g & 0 \\
             0 & g & -\frac{\omega }{2} & 0 \\
             0 & 0 & 0 & g-\frac{\omega }{2} \\
            \end{array}
            \right).
        \end{align}
        By linear algebra, we can obtain the evolution operator $U=\exp(-i H_{\rm tot} T)$ by diagonalization.
        The evolution is given by
        \begin{align}
            U=
            \left(
            \begin{array}{cccc}
             e^{-\frac{1}{2} i T (2 g+\omega )} & 0 & 0 & 0 \\
             0 & \frac{e^{-i T \Omega } \left(\omega  (\omega +2 \Omega )+2 g^2 \left(1+e^{2 i T \Omega }\right)\right)}{2 \Omega  (\omega +2 \Omega )} & -\frac{i g \sin (T \Omega )}{\Omega } & 0 \\
             0 & -\frac{i g \sin (T \Omega )}{\Omega } & \cos (T \Omega )+\frac{i \omega  \sin (T \Omega )}{2 \Omega } & 0 \\
             0 & 0 & 0 & e^{-\frac{1}{2} i T (2 g-\omega )} \\
            \end{array}
            \right),
        \end{align}
        where $\Omega=\sqrt{g^2+\omega ^2/4}$.
        The environment is set as $\ket{0}$ initially.
        Therefore, we can obtain the Kraus operators:
        \begin{align}
            K_1&=\left(
                \begin{array}{cc}
                 e^{-\frac{1}{2} i T (2 g+\omega )} & 0 \\
                 0 & \cos (T \Omega )+\frac{i \omega  \sin (T \Omega )}{2 \Omega } \\
                \end{array}
                \right),\\
            K_2&=\left(
                \begin{array}{cc}
                 0 & -\frac{i g \sin (T \Omega )}{\Omega } \\
                 0 & 0 \\
                \end{array}
                \right),
        \end{align}
        and their derivatives w.r.t. $\omega$:
        \begin{align}
            \dot{K}_1&=\left(
                \begin{array}{cc}
                 -\frac{1}{2} i T e^{-\frac{1}{2} i T (2 g+\omega )} & 0 \\
                 0 & \frac{2 i T \omega ^2 \Omega  \cos (T \Omega )-\left(T \omega ^3+4 g^2 (T \omega -2 i)\right) \sin (T \Omega )}{16 \Omega ^3} \\
                \end{array}
                \right),\\
            \dot{K}_2&=\left(
                \begin{array}{cc}
                 0 & \frac{i g \omega  (\sin (T \Omega )-T \Omega  \cos (T \Omega ))}{4 \Omega ^3} \\
                 0 & 0 \\
                \end{array}
                \right).
        \end{align}
        Then we utilize the formula of extended channel QFI (see \cite{kolodynskiEfficientToolsQuantumMetrologyUncorrelatedNoise2013} for more details), which is given by
        \begin{align}
            \mathcal{F}_\omega^Q=4\min_h \left\Vert \sum_{j} \dot{\tilde{K}}_j^\dagger \dot{\tilde{K}}_j \right\Vert,
        \end{align}
        where $\dot{\tilde{K}}_j=\dot{K}_j-i\sum_{k} h_{jk}K_k$ and $\Vert\cdot\Vert$ denotes operator norm.
        To simplify the calculation, we can write the Kraus operators in the following form
        \begin{align}
            K_1&=\left(
                \begin{array}{cc}
                 b & 0 \\
                 0 & c \\
                \end{array}
                \right),\\
            K_2&=\left(
                \begin{array}{cc}
                 0 & a \\
                 0 & 0 \\
                \end{array}
                \right).
        \end{align}
        Let $A:=\sum_j \dot{\tilde{K}}_j^\dagger \dot{\tilde{K}}_j$, we can write down the following inequality to bound the QFI
        \begin{align}
            \mathcal{F}_\omega^Q= 4\min_h \sigma_{\rm max} \le 4\min_h \Tr(A),
        \end{align}
        where $\sigma_{\rm max}$ is the maximum eigenvalue of $A$. 
        The trace of $A$ is given by
        \begin{align}
            \Tr(A) = 2|h_{12}|^2 + |\dot{a} - i a h_{22}|^2 + |\dot{b} - i b h_{11}|^2 + |\dot{c} - i c h_{11}|^2.
        \end{align}
        The minimal trace can be obtained easily by derivative w.r.t. $h_{11},h_{12},h_{22}$ and setting the result to zero,
        \begin{align}
            \min_h \Tr(A) = \frac{4|a|^2 |\dot a|^2+(a^*\dot a-\dot a^* a)^2}{4|a|^2} + 
            \frac{4(|b|^2+|c|^2) (|\dot b|^2+|\dot c|^2)+(b^*\dot b-\dot b^* b+c^*\dot c-\dot c^* c)^2}{4(|b|^2+|c|^2)}.
        \end{align}
        By substituting the explicit form of $a,b,c,\dot{a},\dot{b},\dot{c}$, we can obtain the upper bound of the QFI
        \begin{align}
            \mathcal{F}_\omega^Q\le
            &\frac{2}{\left(4 g^2+\omega ^2\right)^2 \left(g^2 \cos (2 T \Omega )+3 g^2+\omega ^2\right)}\times
            \Bigl[
            8 g^6 T^2+g^4 \left(22 T^2 \omega ^2+8\right)\\
            &+g^2 \left(2 g^2+\omega ^2\right) \left(T^2 \left(4 g^2+\omega ^2\right)-4\right) \cos (2 T \Omega )+g^2 \omega ^2 \left(9 T^2 \omega ^2+4\right)\\
            &+8 g^4 T \sqrt{4 g^2+\omega ^2} \sin (2 T \Omega )+T^2 \omega ^6
            \Bigr].
        \end{align}
        We can obtain a more concise form by power series expansion:
        \begin{align}
            \mathcal{F}_\omega^Q&\le\frac{2 \left[g T \cos (T \Omega )+\sin (T \Omega )\right]^2}{g^2 (\cos (2 T \Omega )+3)} + O\left(\omega^2T^2/g^2\right)\\
            &=\frac{2\cos^2(\Omega T-\alpha)}{\left[\cos(2\Omega T)+3\right]}\left(T^2+\frac{1}{g^2}\right)+\frac{\omega^2}{g^2} T^2 +O\left(\frac{\omega^4 T^2}{g^4}\right),
        \end{align}
        where $\alpha=\tan^{-1}[1/(gT)]$.
        For sufficient large $T$, we can write
        \begin{align}
            \mathcal{F}_\omega^Q \le \frac{2T^2\cos^2(\Omega T)}{\cos(2\Omega T)+3} + \frac{\omega^2}{g^2}T^2 +O(\omega^4 T^2/g^4),
        \end{align}
        which is the formula that we have shown in the main text.
        If the local minima of the bound are achievable, i.e., the leading term equal to zero, then we have
        \begin{align}
            \min_{f\in\mathcal{N_\delta}}\mathcal{F}_\omega^Q\left[f,\fe\right] \le \frac{\omega^2}{g^2} T^2 + O(\omega^4 T^2/g^4),
        \end{align}
        which needs $\epsilon\ge \epsilon^\ast$ to guarantee the achievability of the local minima.
        If the local minima is not achievable, i.e., $\epsilon < \epsilon^\ast$, we have
        \begin{align}
            \min_{f\in\mathcal{N}_\delta}\mathcal{F}_\omega^Q\le T^2 \left[1+\frac{\omega^2}{g^2}-\frac{2}{\cos (2 T \Omega  \epsilon )+3}\right]+O(\omega^4 T^2/g^4).
        \end{align}
    
        Correspondingly, by utilizing the bound for biased estimation shown in \Cref{app:opt-biased-est}, if $\epsilon\ge\epsilon^\ast$, we have
        \begin{align}
            \lfe\ge \frac{g^2 \omega ^2}{g^2+\nu T^2 \omega ^4}+O\left[{\nu\omega^8 T^2}/{(g^2+\nu T^2\omega^4)^2}\right],
        \end{align}
        and if $\epsilon< \epsilon^\ast$, we have
        \begin{align}
            \lfe\ge\frac{1}{\nu  T^2 \left[\frac{\omega ^2}{g^2}+\cos ^2(\epsilon T \Omega )\right]+\frac{1}{\omega ^2}} + O\bm{(}{\omega ^4}/\{{\nu  T^2 [g^2 \cos ^2(T \Omega  \epsilon )+\omega ^2]^2}\}\bm{)}.
        \end{align}
    
    \section{The control-enhanced (CE) case}\label{app:controlled-evolution}
    
    In this section, we use the same assumptions as previously mentioned, which is $|\omega|\ll g$ and the environment is initialized in $|0\rangle$.
    
    \subsection{Evolution with perfect control}\label{app:evolution-with-perfect-control}
    \begin{figure}
        \centering
        \includegraphics[width=0.8\textwidth]{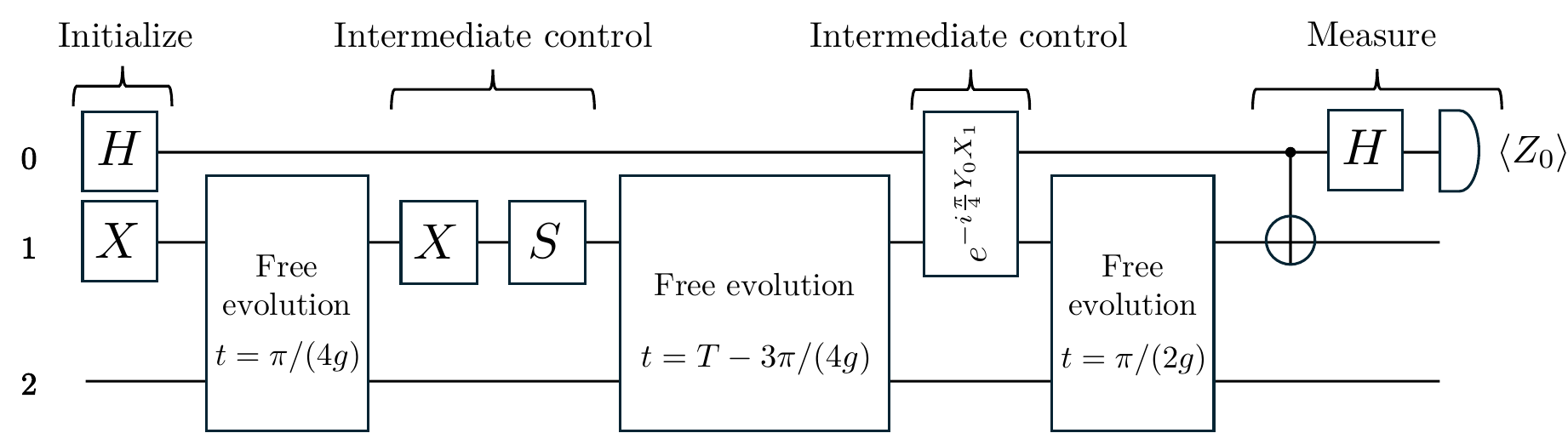}
        \caption{The detailed control strategy for frequency estimation. The qubit of ancilla, system, and environment are labeled by 0, 1, and 2 respectively. All state is set to be $\ket{0}$ at the beginning. The gates $H,~S,$ and$~X$ in the circuit are basic quantum logic gates, representing Hadamard, $\pi/2$-phase, and Pauli-$X$ gate respectively.}
        \label{fig:detailed-ctrl}
    \end{figure}
    For clarity, we first assume that everything is ideal and discuss the imperfections later.
    To proceed, we label the qubit of ancilla, system, and environment by 0, 1, and 2 respectively.
    As shown in \Cref{fig:detailed-ctrl}, we begin with the state $\ket{+ 1}$, which is actually on the joint space of the ancilla and system.
    We will first focus on the joint space of the system and environment, where the state is $\ket{10}$ currently.
    We then convert the state into $e^{-i\text{SWAP}\pi/4}|10\rangle=|01\rangle+i|10\rangle$ by utilizing the interaction between system and environment, which costs time $\pi/(4 g)$.
    By applying an additional control $S_1X_1$ as shown in \Cref{app:detail-measurement}, the state $|01\rangle+i|10\rangle$ can be transformed into $|00\rangle+|11\rangle$, which is the starting point for the following discussion.
    The ancilla is prepared into $\ket{+}$ by the Hadamard gate. Then the current quantum state can be written as
    \begin{align}
        |+\rangle(|00\rangle+|11\rangle).
    \end{align}
    We let this state evolve freely according to the Hamiltonian $H=g\text{SWAP}_{12}+\omega Z_1/2$.
    We noticed that the above state is an eigenstate of SWAP.
    As all the states generated by the rotation in system Hamiltonian still lie on the eigenspace of the interaction Hamiltonian, where the interaction Hamiltonian has zero effect on the state.
    Thus, we end up with such a state after evolution time $T'=T-\frac{3\pi}{4g}$,
    \begin{align}
        |+\rangle(|00\rangle+e^{-i\omega T'}|11\rangle).
    \end{align}
    Here we apply an additional control $\exp(-iY_0 X_1\pi/4)$ on the joint space of the ancilla and the system. By linear algebra, we obtain the state
    \begin{align}
        (|00\rangle+|10\rangle-|01\rangle+|11\rangle)|0\rangle+e^{-i\omega T'}(|01\rangle+|11\rangle-|00\rangle+|10\rangle)|1\rangle.
    \end{align}
    At this point, if we trace out the third qubit (environment), the phase factor containing the parameter $\omega$ will get canceled. 
    Thus, this gives us no information about $\omega$. 
    To tackle this, we allow the state to evolve freely subject to the Hamiltonian for an appropriate time, i.e., $\pi/(2 g)$ such that the evolution approximates a full SWAP.
    Then we have
    \begin{align}
        (|00\rangle+|10\rangle-e^{-i\omega T'}|01\rangle+e^{-i\omega T'}|11\rangle)|0\rangle+(e^{-i\omega T'}|01\rangle+e^{-i\omega T'}|11\rangle-|00\rangle+|10\rangle)|1\rangle.
    \end{align}
    By tracing out the environment and rewriting it into density matrix form, we have
    \begin{align}
        \rho=\left(
            \begin{array}{cccc}
                \frac{1}{4} & -\frac{1}{4} e^{i T' \omega } & 0 & 0 \\
                -\frac{1}{4} e^{-i T' \omega } & \frac{1}{4} & 0 & 0 \\
                0 & 0 & \frac{1}{4} & \frac{1}{4} e^{i T' \omega } \\
                0 & 0 & \frac{1}{4} e^{-i T' \omega } & \frac{1}{4} \\
            \end{array}
            \right).
    \end{align}
    We can measure the output state in the following basis
    \begin{align}
        \{\ket{0}\ket{+},\ket{0}\ket{-},\ket{1}\ket{+},\ket{1}\ket{-}\},
    \end{align}
    which generates the following probabilities
    \begin{align}
        &p_1=p_4=\Tr\left(\ket{0,+}\bra{0,+}\rho\right)=\Tr\left(\ket{1,-}\bra{1,-}\rho\right)=\frac{1}{2}\sin^2\left(\frac{\omega T'}{2}\right),\\
        &p_2=p_3=\Tr\left(\ket{1,+}\bra{1,+}\rho\right)=\Tr\left(\ket{0,-}\bra{0,-}\rho\right)=\frac{1}{2}\cos^2\left(\frac{\omega T'}{2}\right),
    \end{align}
    which is actually a Boolean-valued outcome. 
    The Fisher information associated with the probabilities is given by
    \begin{align}
        \mathcal{I}_\omega&=\sum_j \frac{1}{p_j}\left(\frac{\partial p_j}{\partial \omega}\right)^2\\
        &=T'^2 = \left(T-\frac{3\pi}{4g}\right)^2.
    \end{align}
    which is close to the optimal QFI.

    \subsection{Bound the error in CE case}
    \subsubsection{Bound the systematic error}\label{app:bound-bias-CE}
    In our setup, the imperfections can be modeled as sum of unitaries. 
    We first give a general discription and then analyze a specific case for demonstration.
    
    A sequence of unitaries $U_i$ is affected by some imperfection $\epsilon_i V_i$ ($\epsilon_i \ge 0~\forall~i$), which can be written as
    \begin{align}
        U=\prod_{i} U_i(1+\epsilon_i V_i)+\mathbf{O}(\eta^2)=\mathcal{U}_0+\sum_{i}\epsilon_i\mathcal{V}_i+\mathbf{O}(\eta^2),
    \end{align}
    where $\mathcal{U}_0$ and $\mathcal{V}_i$ represent the zeroth and first order of unitaries in the series expansion.
    $\mathcal{V}_i$ are also unitary.
    The bold big-O notation is for matrix form and $\eta$ is defined by $\eta:=\sum_i\epsilon_i$.
    Specifically, the zeroth order $\mathcal{U}_0=\prod_i U_i$ refers to the operation in the noiseless scenario.
    If we prepare a product initial state $\ket{\psi_0}$ then applying the unitary, and trace out the environment at the end, we have
    \begin{align}
        \rho&=\Tr_{\rm E}\left[U\ket{\psi_0}\bra{\psi_0}U^\dagger\right]\\
        &=\rho_0+\Tr_{\rm E} \left[\sum_i\epsilon_i\mathcal{V}_i\ket{\psi_0}\bra{\psi_0}\mathcal{U}_0 + h.c. \right] + \mathbf{O}(\eta^2),
    \end{align}
    where $\rho_0$ is the zeroth order density matrix that can be written as $\Tr_{\rm E}\left[\mathcal{U}_0\ketbra{\psi_0}\mathcal{U}_0^\dagger\right]$.
    Then we can bound first order error by
    \begin{align}
        \Vert \rho - \rho_0\Vert \le 2\eta + {O}(\eta^2).
    \end{align}
    The norm denotes the operator norm. 
    This is a straightforward application of the triangular inequality.
    The reason why we exploit the operator norm is that it provides an upper bound for the error in the output probabilities.
    For projective measurement $M=\ketbra{\Psi}$, the output probability is given by $\Tr(\rho \ketbra{\Psi})$.
    The upper bound of the distance in the output probability can be obtained by eigendecomposition in eigenbasis of the operator $(\rho-\rho_0)$ that we write as $A$ for short:
    \begin{align}\label{eq:bound-prob-err-by-op-norm-rho}
        \left| \braket{\Psi|A|\Psi} \right| = \left| \sum_{i,j} \Psi_i^*\Psi_j \bra{a_i} \left(\sum_{k}\lambda_k\ketbra{a_k}\right)\ket{a_j} \right| = \left|\sum_k \Psi_k^*\Psi_k  \lambda_k \right| \le \sum_k \Psi_k^*\Psi_k\left|\lambda_k \right| \le \max_k |\lambda_k| \le \Vert A\Vert.
    \end{align}
    Then we have $|p-p_0|\le\Vert \rho-\rho_0\Vert$ where $p$ and $p_0$ are the output probabilities of $\rho$ and $\rho_0$ respectively.
    
    For illustration, we show how this model is useful for our setup in a specific example.
    In the CE case, we have several consequences of the control imperfection.
    One of them is the evolution of the system during applying the approximate SWAP.
    According to \Cref{app:evolution-with-perfect-control}, the whole controlled evolution can be written as following form
    \begin{align}
        U=U_1 e^{i\delta_1}U_2 e^{i\delta_2} &= U_1
        \left(\mathds{1}\cos(\omega\pi/8g)-iZ_1\sin(\omega\pi/8g)\right)U_2\\
        &\times\left(\mathds{1}\cos(\omega\pi/4g)-iZ_1\sin(\omega\pi/4g)\right)+\mathbf{O}(\omega^2/g^2)\\
        &=U_1U_2+\frac{\omega\pi}{8g}(V_1+2V_2)+\mathbf{O}(\omega^2/g^2),
    \end{align}
    where $V_{1,2}$ are two unitary operators.
    Then we can upper bound the leading-order error
    \begin{align}
        \frac{|\omega|\pi}{8g}(\Vert U_1+2U_2\Vert) \le\frac{\pi|\omega|}{8g}\left(\Vert U_1\Vert+2\Vert U_2\Vert\right) =\frac{3\pi|\omega|}{8g},
    \end{align}
    where the second inequality comes from the triangle inequality and $\Vert\cdot\Vert$ denotes the operator norm.
    By this analyzation, we see that the total error is the sum of each individual error term, which verifies the mentioned technique.
    We will consistently use this trick in the following calculation.
    From the result, we find that this imperfection will contribute an error $\epsilon_{1}=3\pi|\omega|/(8g)$.
    
    Another important source of error is the free precession in the middle part.
    In this stage, the state is almost in the degenerate eigen-subspace (DES) of SWAP.
    Therefore the dynamics of SWAP will only affect the part of the state outside DES, and the state in DES will be only affected by the dynamics of system evolution due to clock uncertainty.
    Specifically,
    \begin{align}
        \ket{\psi_{\rm after}}&=\e^{-\i H T'}\ket{\psi_{\rm before}}\\
        &=\e^{-\i H T'}\left(\ket{\psi_{\rm DES}}+\epsilon'\ket{\psi_{\rm DES}^{\perp}}\right)\\
        &\approx \e^{-\i \omega Z_1 T' / 2}\ket{\psi_{\rm DES}}+\epsilon'\e^{-\i g \textrm{SWAP}T'}\ket{\psi_{\rm DES}^{\perp}}.
    \end{align}
    We know that the error part is due to the imperfection of state preparation and controls before the free precession.
    This imperfection can be quantified by the error term
    \begin{align}\label{eq:app:eps-prime}
        \epsilon'=\frac{\epsilon\pi}{4} + \epsilon(\phi_{\rm c0}+\phi_{\rm c1}),
    \end{align}
    where the two terms represent the errors from clock uncertainty in approximating the partial SWAP, and the clock uncertainty in state preparation and first control.
    We will show how to obtain this result.
    Before we proceed, we first consider a quantum gate with a phase factor $\varphi$, i.e.,
    \begin{align}
        \text{GATE}=e^{-i A\varphi},
    \end{align}
    where $A$ is Hermiian and $\Vert A\Vert=1$.
    If the gate is applied with clock uncertainty, e.g., $\epsilon\ll 1$, we have
    \begin{align}
        \text{GATE}^{1+\epsilon}=e^{-i A\varphi (1+\epsilon)}=e^{-i A\varphi}(\mathds{1}-i A\epsilon \varphi),
    \end{align}
    and the leading-order error term can be bounded by
    \begin{align}
        \Vert\text{GATE}^{1+\epsilon}-\text{GATE}\Vert \le \Vert A \epsilon\varphi\Vert= |\epsilon\varphi|.
    \end{align}
    From this result, we can conclude that the error term contributed by a rotation gate is propotional to the phase associated with the gate.
    As shown in \Cref{eq:app:eps-prime}, the terms $\phi_{\rm c0}$ and $\phi_{\rm c1}$ are phases in the rotation gates of the state preparation and the first control.
    In our control scheme, we assume the state initially is $\ket{00}$.
    As shown in \Cref{fig:detailed-ctrl}, we need to prepare the state into $\ket{+1}$.
    To finish this, we need the phase at most $\pi/2$.
    The second control is $S X$, and we need the phase at most $\pi/2+\pi/4$, where $\pi/4$ for $S=\exp(-iZ\pi/4)$ and $X=\exp(-iX\pi/2)$.
    In total, $\phi_{\rm c0}+\phi_{\rm c1}=\frac{5\pi}{4}$.
    Combined with the error in free precession, these error terms can be summarized to $\epsilon_2=\epsilon|\omega| T'/2 + \epsilon'$.
    
    We can do a similar analysis for the remaining controls and evolutions.
    First, as illustrated in \Cref{fig:detailed-ctrl}, the second controls need phase $\pi/4$ and the final readout control needs to rotate the state from $Z\otimes X$ basis to computational basis $Z\otimes Z$ so that it needs phase $\pi/2$.
    Therefore, combined with the error in the evolution, we have
    \begin{align}
        \epsilon_3=\frac{\epsilon \pi}{2} + \frac{3\pi\epsilon}{4}.
    \end{align}
    Finally, we can combine all of the error terms to get
    \begin{align}
        \eta = \epsilon_1 + \epsilon_2 + \epsilon_3 = \frac{3\pi|\omega|}{8g}+\frac{11\pi\epsilon}{4} + \frac{\epsilon|\omega| T'}{2}.
    \end{align}
    In density matrix form, the error term is bounded by
    \begin{align}
        \Vert \tilde\rho -\rho\Vert\le 2\eta + O(\eta^2).
    \end{align}
    where $\tilde\rho$ is the state with the error term and the first inequality comes from the triangle inequality.
    In terms of probability, we can use the property shown in \Cref{eq:bound-prob-err-by-op-norm-rho} to get
    \begin{align}
        |\tilde{p}-p|=\left|\Tr\left[(\tilde\rho-\rho) M^\dagger M\right]\right|\le \Vert\tilde\rho-\rho\Vert\le 2\eta + O(\eta^2).
    \end{align}
    where $M$ is the measurement operator.
    
    The above analysis shows how the error in control affects the probabilities of measurement outcomes.
    Note that this analysis does not rely on specific form of distribution, meaning for any bounded distribution of the time uncertainty, the conclusion is still valid.
    This differs from the FE case where the measurement outcome is sensitive to the distribution.
    
    Furthermore, if $p=\cos^2(\omega T')$ and we use the function $\hat\omega(p)=\frac{1}{T'}\cos^{-1}(2p-1)$ to estimate the frequency, the bias is then given by
    \begin{align}
        |b(\omega)|&=\left|\frac{\partial f}{\partial p}\right|\times|\tilde{p}-p|+O\left(|\tilde{p}-p|^2\right)\\
        &\le \frac{c}{T'}\times 2\eta + O(\eta^2)\\
        &= \frac{2c}{T'}(\frac{3\pi\omega}{8g}+\frac{11\pi\epsilon}{4} + \frac{\epsilon\omega T'}{2}).
    \end{align}
    where $c=\left|\partial_p \cos^{-1}(2p-1)\right|$ and $T'=T-\frac{3\pi}{4g}$.
    As we know that $2p - 1=\cos(\omega T')$, we have
    \begin{align}
        |b(\omega)| = \frac{4 |\csc(\omega T')|\eta}{T'}
    \end{align}
    
    \subsubsection{Upper bound of MSE}\label{app:upper-bound-Lce}
    We propose a detailed control scheme, with certain input state, controls, and measurement.
    Therefore, we can directly calculate the MSE of the CE case.
    We denote the estimator by $\hat \omega(\mathbf{x})$, where $\mathbf{x}$ is the measurement data.
    The MSE can be written as
    \begin{align}
        \mathrm{E}\left[(\hat\omega-\omega)^2\right]=\Var(\hat\omega)+\left[\mathrm{E}(\hat\omega)-\omega\right]^2.
    \end{align}
    The second term is the bias that has been discussed in the last section.
    We merely focus on the variance here.
    Let $\mathbf{x}=(x_1, x_2, ..., x_\nu)$ be the measurement outcomes.
    Concretely, we measure the computational basis and assign the $x_i$ with different values as follows
    \begin{align}
        &x_i = -1, &\text{if the outcome is spin up.}\\
        &x_i = 1,  &\text{if the outcome is spin down.}
    \end{align}
    The statistical fluctuation of the measurement outcomes is the only source that contributes to the variance.
    There are many ways of combining the measurement data to give a prediction for the desired parameter $\omega$, but here we consider the average of $\{x_i\}$, i.e.,
    \begin{align}
        \hat\omega(\mathbf{x})=\hat\omega(\bar{x})=\frac{1}{T'}\cos^{-1}\left(\bar x\right),
    \end{align}
    where $\bar x:=\sum_{i=1}^\nu x_i/\nu$.
    According to error propagation, the variance of $\hat\omega$ can be written as
    \begin{align}
        \Var(\hat\omega) &= \left|\frac{\partial\hat\omega}{\partial\bar x}\right|^2\Var(\bar x)\\
        &=\frac{1}{T^2(1-\bar x^2)}\Var(\bar x)\label{app:eq:error-propagation}.
    \end{align}
    The another term $\Var(\bar x)$ can be written as
    \begin{align}
        \Var(\bar x) = \frac{1}{\nu^2} \Var\left(\sum_i x_i\right) = \frac{1}{\nu}\Var(x).
    \end{align}
    The last equality comes from the i.i.d. condition.
    In our model, we can consider the following probability distribution, for each single-shot measurement, the probability of spin pointing up or down can be represented as follows
    \begin{align}
        &\Pr(x=1) = p + \bar\alpha,\\
        &\Pr(x=-1) = 1 - p - \bar\alpha.
    \end{align}
    Here $\bar\alpha$ is the averaged shift of the original probability caused by the clock uncertainty and can be written as
    \begin{align}
        \bar\alpha = \int du \alpha(u) f(u),
    \end{align}
    where $\alpha(u)$ is the shift of the probability in one-shot measurement.
    As we have shown in the last section, $|\alpha|$ is bounded within $[0,2\eta]$.
    Then we have the overall variance of $x$:
    \begin{align}
        \Var(x) &= \mathrm{E}(x^2) - [\mathrm{E}(x)]^2\\
        &= 1 - \left[(2p-1)+2\bar\alpha\right]^2\\
        &= 1- (2p-1)^2 + O(\eta).
    \end{align}
    Substituting all to the \Cref{app:eq:error-propagation}, we have
    \begin{align}
        \Var(\hat \omega) &= \frac{1}{\nu T^2} \times\frac{1-[\mathrm{E}(x)]^2}{1-\bar x^2}\\
        &\approx\frac{1}{\nu T^2} \times \frac{1-(2p-1)^2 + O(\eta)}{1-(2p-1)^2+O(\eta)}\\
        &=\frac{1+O(\eta)}{\nu T^2},
    \end{align}
    where in the second equality we make the approximation $\bar x \approx \mathrm{E}(x)$. This approximation is correct for sufficiently large $\nu$.
    Combining the result in the last section, we have the upper bound for the $\lce$:
    \begin{align}
        \lce \le \frac{1+O(\eta)}{\nu T'^2} + \frac{16\csc^2(\omega T')\eta^2+O(\eta^3)}{T'^2},
    \end{align}
    where $\eta=\frac{3\pi|\omega|}{8g}+\frac{11\pi\epsilon}{4}+\frac{\epsilon|\omega| T'}{2}$.
    This completes the proof of Theorem 1 in main text.
    
    \section{Details on measurements}\label{app:detail-measurement}
    
    The Hamiltonian can be expressed as $H=g\mathrm{SWAP}_{12}+\omega Z_1/ 2$, where indices $0$, $1$, and $2$ denote the ancilla, system, and environment, respectively. This notation will be consistently used throughout this and next section.
    Our analysis will still focus on the strong-coupling regime, i.e., $g\gg \omega$.
    
    \medskip
    
    \noindent\textbf{Free evolution estimation procedure.}
    \begin{itemize}
        \item Initialize the system-ancilla state to the Bell state $\ket{00}+\ket{11}$.
        \item Measure at time $T$, which satisfies $\exp(-igT\mathrm{SWAP})=\mathds{1}\Leftrightarrow\sin(gT)^2=0$.
        \item Estimate the expectation value $\braket{O_{\rm FE}}$ of the observable $O_{\rm FE}=\mathrm{CNOT}_{01} X_0 \mathrm{CNOT}_{01}$.
        \item Estimate the frequency as $\hat\omega/2 = \cos^{-1}(\braket{O_{\rm FE}})/T$, where the factor $1/2$ is the correction of Zeno dynamics.
    \end{itemize}
    
    We will now demonstrate the optimality of this protocol by calculating the Quantum Fisher Information (QFI) of the measurement procedure and deriving an upper bound that coincides with it. This proof assumes the absence of clock uncertainty, as incorporating unknown clock uncertainty would make finding an optimal strategy for generic distributions nearly impossible.
    
    \begin{proof}
    In the case without clock uncertainty, i.e., $\epsilon=0$, 
    the final state in the protocol can be readily determined as:
    \begin{align}
        \ket{\psi}=(\ket{00}+e^{-i\omega T/2}\ket{11})/\sqrt{2}
    \end{align}
    The QFI of the above protocol is given by:
    \begin{align}
        \mathcal{F}^Q_{\omega} &= 4\left(\braket{\partial_\omega \psi|\partial_\omega \psi}-|\braket{\partial_{\omega}\psi|\psi}|^2\right)\\
        &= T^2/4.
    \end{align}
    
    We implement the same techique that we have used in \Cref{app:upper-bound-QFI-FE} to derive another upper bound of the QFI in the FE case by selecting a specific $h$:
    \begin{align}
        \mathcal{F}^Q_{\omega}=4\min_h \left\Vert \sum_{j} \dot{\tilde{K}}_j^\dagger \dot{\tilde{K}}_j \right\Vert \le 4 \left\Vert \sum_{j} \dot{\tilde{K}}_j^\dagger \dot{\tilde{K}}_j \right\Vert.
    \end{align}
    Equivalently:
    \begin{align}
        \mathcal{F}^Q_{\omega}\le 4 \sigma_{\rm max},
    \end{align}
    where $\sigma_{\rm max}$ is the largest singular value of the matrix $A=\sum_{j} \dot{\tilde{K}}_j^\dagger \dot{\tilde{K}}_j$.
    By setting $h_{12}=h_{21}=h_{22}=0$ and $h_{11}=-T/4$, we obtain a diagonalized $A$ as follows:
    \begin{align}
        A=T^2\begin{pmatrix}
            1/16 & 0\\
            0 & \frac{1}{32} \left[\cos (T \Omega' )+1\right]+O(\omega^2/g^2)
        \end{pmatrix},
    \end{align}
    where $\Omega'=\sqrt{4g^2+\omega^2}$.
    The largest singular value of $A$ is $T^2/16$.
    Consequently, we have:
    \begin{align}
        F\le 4\sigma_{\rm max}=T^2/4,
    \end{align}
    which matches the QFI of the protocol, thus proving its optimality.
    \end{proof}
    
    \medskip
    \noindent\textbf{Controlled evolution estimation procedure}
    \medskip
    
    The following procedure after the process shown in \Cref{fig:detailed-ctrl} is listed below:
    \begin{itemize}
        \item Measure the expectation value $\braket{O_{\rm CE}}$ of the observable $O_{\rm CE}=Z_0 X_1$.
        \item Estimate the frequency as $\hat \omega = \cos^{-1}(\braket{O_{\rm CE}})/T$.
    \end{itemize}
    
    \begin{figure}
        \centering
        \includegraphics[width=0.8\linewidth]{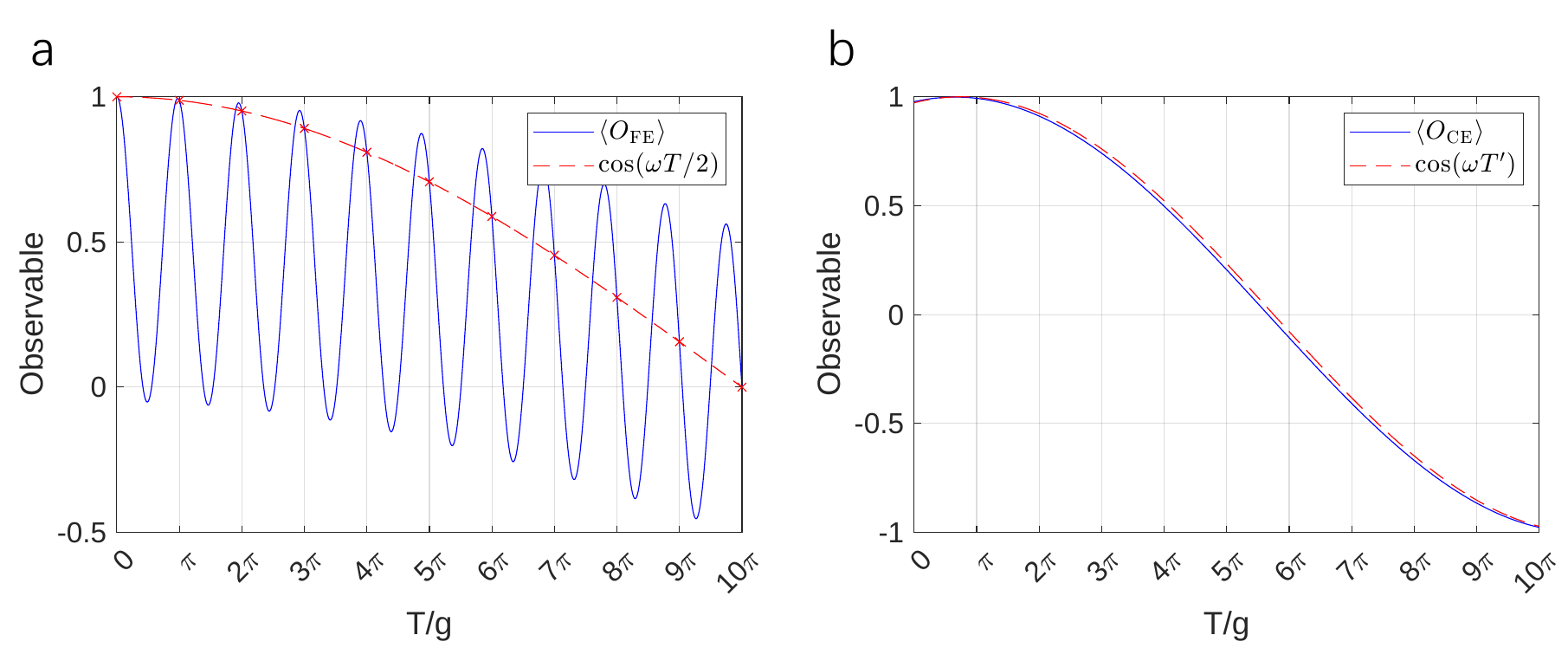}
        \caption{\textbf{The observables evolve over time.} The observables $O_{\rm FE}$ and $O_{\rm CE}$ serve as estimators of the unknown frequency $\omega$, providing optimal estimation efficiency in FE and CE case, respectively. \textbf{(a)} illustrates the observable in the FE case, which exhibits rapid fluctuations due to the dynamical evolution of the interaction Hamiltonian. The red crosses indicate the points where the Quantum Fisher Information (QFI) reaches its maximum, corresponding to the value of $\cos(\omega T /2)$. \textbf{(b)} depicts the controlled evolution.}
        \label{fig:obs-T}
    \end{figure}
    
    To clearly demonstrate the distinction between these two protocols, we plot the time-dependent observables in \Cref{fig:obs-T}.
    In \Cref{fig:obs-T}~(a), the system must be measured at a series of discrete time points (small red crosses) to obtain a precise estimation of $\omega$. 
    Due to the rapid fluctuations, the requirements for clock precision become significantly more stringent.
    In contrast, as shown in \Cref{fig:obs-T}~(b), the observable changes slowly and reflects a rate of change that depends solely on the desired parameter $\omega$, providing greater robustness against clock uncertainty.
    Therefore, we can conclude that our control strategy effectively mitigates the estimation error in the presence of clock uncertainty.
    
    \section{Supplemental details on CNOT interaction}\label{app:cnot-interaction}
    
    The Hamiltonian is given by $H=\omega Z_1/2+g \text{CNOT}_{12}$ and the free evolution can be written as $e^{-iHt}$. 
    In this section, we also focus on the strong-coupling regime, i.e., $g\gg \omega$.
    In the FE case, since the CNOT operator commutes with the Pauli Z, there is no Zeno dynamics. Thus, we can apply measurement at the times of decoupling to recover the precision, which is the same as demonstrated in the previous section with only a change in the correction of Zeno dynamics. We obtain
    \begin{align}
        \hat\omega = \cos^{-1}(\braket{O_{\rm FE}})/T.
    \end{align}
    The optimality is proven by calculating the Fisher information, which is $T^2$, coinciding with the interaction-free case.
    
    The CE strategy provided is quite complicated. To illustrate, we introduce a parameterized circuit $U$ defined in \Cref{fig:vqa-ansatz}. This is a parameterized quantum circuit with six parameters, where the rotation gates are defined as
    \begin{align*}
        &R_X(\theta):=e^{-i\theta X/2},~R_Y(\theta):=e^{-i\theta Y/2},\\
        &R_{XX}(\theta):=e^{-i \theta X\otimes X},~R_{ZZ}(\theta):=e^{-i \theta Z\otimes Z}.
    \end{align*}
    
    We utilize this parameterized circuit for the state preparation and the read-out transformation as shown in \Cref{fig:cnot-config}~(a). The detailed parameters are listed in \Cref{fig:cnot-config}~(b). At the end of the circuit, we apply measurement to calculate the expectation value of the observable $O(\rm CE)$. Then, we apply a function that maps the expectation value of the observable to the desired parameter $\omega$. The explicit form of the function is given below:
    \begin{align}
        \hat\omega_{\rm CE}(\bar x)= 0.99443 \left[2\pi-\cos^{-1}(x/0.499971519) + 2\pi K-197.427242\right]/T,
    \end{align}
    where $K:=\lfloor (\omega T+197.427242)/(2\pi)\rfloor$ is assumed to be known, and $\bar x$ is the averaged measurement outcome. The function is obtained by curve fitting as its analytical form can be much more complicated.
    
    \begin{figure}
        \centering
        \includegraphics[width=0.6\linewidth]{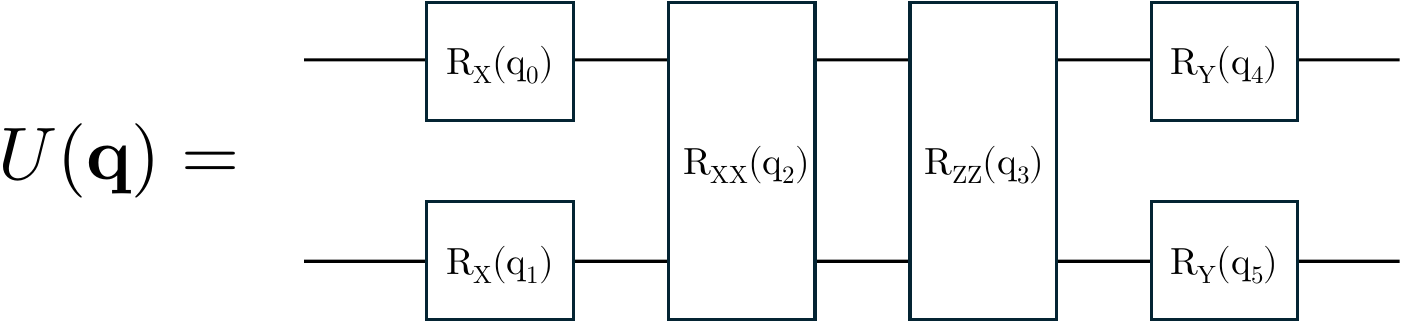}
        \caption{\textbf{Circuit representation of the ansatz.}}
        \label{fig:vqa-ansatz}
    \end{figure}
    
    \begin{figure}
        \centering
        \includegraphics[width=0.8\linewidth]{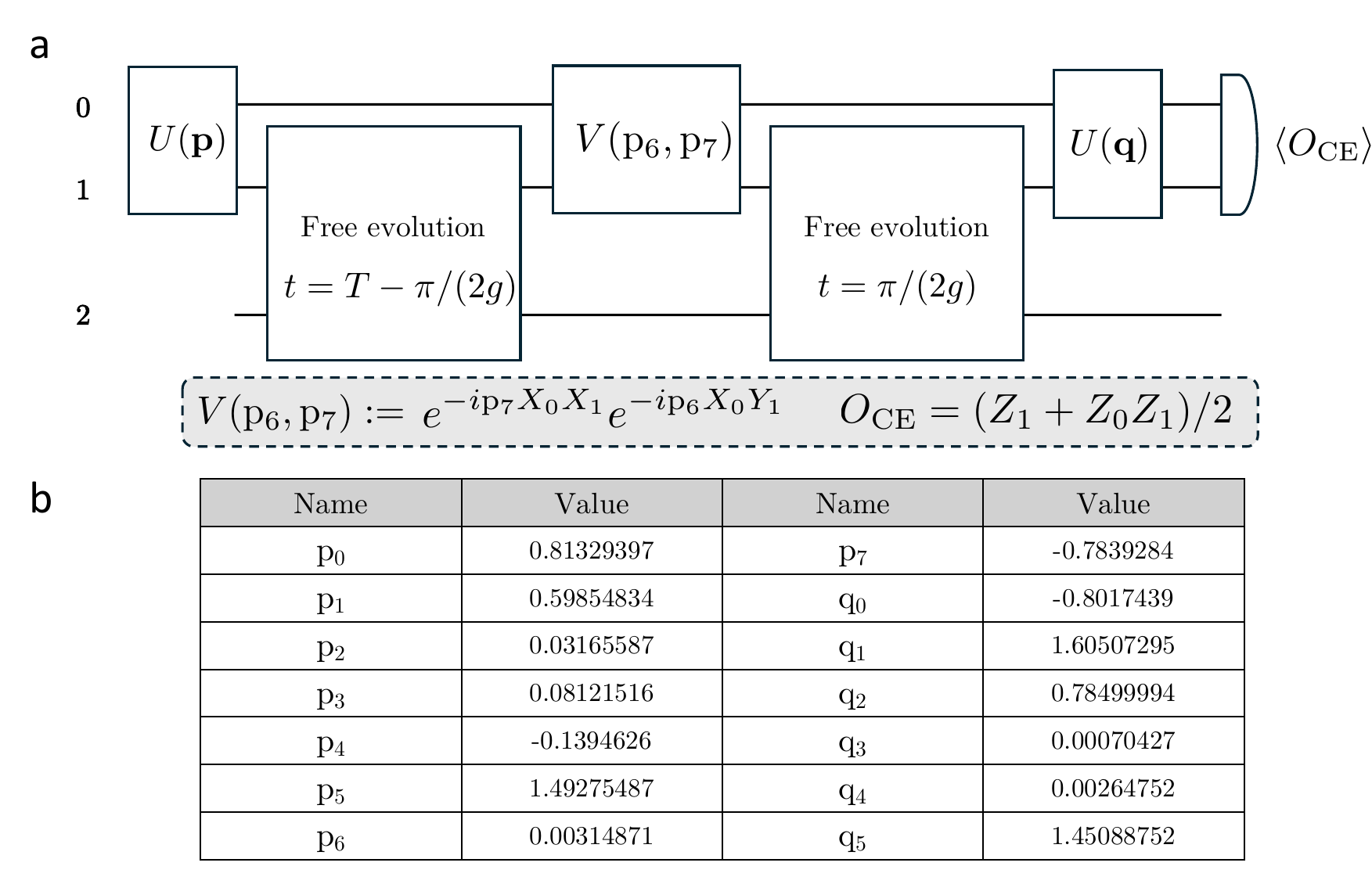}
        \caption{\textbf{Detailed configuration of the estimation protocol.} \textbf{(a)} represents the sensing protocol, including preparation unitary $U(\mathbf{p})$, the intermediate control $V$, and the transformation before readout $U(\mathbf{p})$. \textbf{(b)} shows all the corresponding values of each parameter in the circuit.}
        \label{fig:cnot-config}
    \end{figure}
    
    We can bound the bias as we have done in the case of SWAP interaction using the same technique as shown in \Cref{app:bound-bias-CE}.
    To obtain an upper bound of the overall error, we only need to calculate the phase factors in the unitaries shown in \Cref{fig:cnot-config}.
    To proceed, we first analyze the parameter circuit shown in \Cref{fig:vqa-ansatz}.
    The total phase in the circuit can be represented as
    \begin{align}
        \phi(U) \le \sum_{l} \max_i\{|\phi^{(l)}_i|\},
    \end{align}
    where indices $l$'s denote different layers and $i$'s represent the different gates in the same layer, and $\phi_i^{(l)}$ is the phase of a single gate.
    We assume the initial state is $\ket{0}$ in all qubits.
    By simple calculations, we can obtain that the phase in the pulses is bounded by
    \begin{align}
        |\phi_{\rm ctrl}|\le 4.366657005.
    \end{align}
    The phases accumulated in the free precession is bounded by
    \begin{align}
        |\phi_{\rm prec}|\le |\omega| T/2.
    \end{align}
    
    We conclude that the error in density matrix caused by the clock uncertainty is bounded by
    \begin{align}
        \Vert\tilde{\rho}-\rho\Vert \le 2 \epsilon (|\phi_{\rm prec}|+|\phi_{\rm ctrl}|),
    \end{align}
    where $\tilde{\cdot}$ denotes the term that affected by the error.
    The estimator can be considered as a function of the expectation value of the observable, which can be bounded by
    \begin{align}
        \left|\Tr[O_{\rm CE}(\tilde{\rho}-\rho)]\right|\le \Vert\tilde{\rho}-\rho\Vert\sum_{\lambda} |\lambda|,
    \end{align}
    where $\lambda$'s are eigenvalues of $O_{\rm CE}$.
    Then we can bound the bias in estimating $\omega$ by
    \begin{align}
        |b_{\rm CE}| &\le \left|\frac{\partial\hat{\omega}(x)}{\partial x}\right|\times \Vert\tilde{\rho}-\rho\Vert\sum_{\lambda}|\lambda|\\
        &=4\epsilon(|\phi_{\rm prec}|+|\phi_{\rm ctrl}|)\left|\frac{\partial\hat{\omega}(x)}{\partial x}\right|_{x=\braket{O_{\rm CE}}}\\
        &=\frac{8\epsilon}{T \sqrt{1-4 \braket{O_{\rm CE}}^2}}(|\phi_{\rm prec}|+|\phi_{\rm ctrl}|).
    \end{align}    
\end{widetext}

\end{document}